\newtheorem*{lem2}{Lemma 2}
\newcommand{\goes}{\:\longrightarrow\:}		% --> 
\newcommand{\half}{\frac{1}{2}}
\newtheorem{lemma}{Lemma}
\newcommand{\cH}{\mathcal{H}}
\newcommand{\ket}[1]{\left| #1\right\rangle}      % ket vector
\newcommand{\bra}[1]{\left\langle #1\right|}      % bra vector
\newcommand{\kets}[1]{| #1 \rangle}    				    % small ket vector
\newcommand{\bras}[1]{\langle #1 |}        				% small bra vector
\newcommand{\ii}{\mathbb{I}}											% the I with two vertical lines
\newcommand{\norm}[1]{\left\| #1\right\|}        	% norm
\newcommand{\ep}{\epsilon}        								% epsilon
\newcommand{\BQP}{{\textsf{BQP}}}
\newcommand{\QMA}{{\textsf{QMA}}}
\begin{document}

\title{Fast Universal Quantum Computation with Railroad-switch Local Hamiltonians}

\author{Daniel Nagaj}
\affiliation{Research Center for Quantum Information, Institute of Physics, Slovak Academy of Sciences, D\'{u}bravsk\'{a} cesta 9, 845 11 Bratislava, Slovakia}
%\affiliation{Quniverse, L\'{i}\v{s}\v{c}ie \'{u}dolie 116, 841 04, Bratislava, Slovakia}
\email{daniel.nagaj@savba.sk}

\begin{abstract}
We present two universal models of quantum computation with a time-independent, 
frustration-free Hamiltonian. 
The first construction uses 3-local (qubit) projectors, and the second one requires 
only 2-local qubit-qutrit projectors.
We build on Feynman's Hamiltonian computer idea \cite{Feynman:85} and use a 
{\em railroad-switch} type clock register.
%, inspired by the triangle Hamiltonian of Eldar et al. \cite{ERtriangle}.
The resources required to simulate a quantum circuit with $L$ gates in this model are $O(L)$ small-dimensional quantum systems (qubits or qutrits), 
a time-independent Hamiltonian composed of $O(L)$ local, constant norm, projector terms, 
the possibility to prepare computational basis product states, a running time $O(L \log^2 L)$, and the possibility to measure a few qubits in the computational basis. 
Our models also give a simplified proof of the universality of 3-local Adiabatic Quantum Computation. 
\end{abstract}

\date{\today}

\maketitle

%%%%%%%%%%%%%%%%%%%%%%%%%%%%%%%%%%%%%%%%%%%%%%%%%%%%%%%%%
%%%%%%%%%%%%%%%%%%%%%%%%%%%%%%%%%%%%%%%%%%%%%%%%%%%%%%%%%

\section{Introduction}

Since the introduction of the standard circuit model \cite{NCbook},
many new models of quantum computation \cite{model:measurement,model:graph,model:anyons,model:adiabatic,model:walk} 
have been developed. Inspired by Feynman's ideas about how to use quantum systems to simulate reversible computation
\cite{Feynman:85}, a class of `analog' Hamiltonian quantum computing models computing is based on time evolution with time-independent or slowly changing Hamiltonians.
%These models brought forth another interesting vantage point from which to examine
%the usual quantum circuit model composed of a sequence of unitaries.
The basic idea is to encode the progress of a quantum computation $U$ (a quantum circuit composed of a sequence of local unitaries) with $L$ gates 
\begin{eqnarray}
	U = U_L \cdots U_2 U_1
	\label{Ucircuit}
\end{eqnarray}
on an initial state $\ket{\varphi_0}_w$ into a {\em history state} of a system with two registers $\cH_{work}\otimes \cH_{clock}$ as
\begin{eqnarray}
	\ket{\psi_{hist}} 
		&=& \frac{1}{\sqrt{L+1}} \sum_{t=0}^{L} 
		\underbrace{\big(U_t \dots U_0 \ket{\varphi_0}_{w}\big)}_{\ket{\varphi_t}_w}
		\otimes \ket{t}_{c}. \label{history}
\end{eqnarray}

First, one can now look for a simple Hamiltonian with \eqref{history} as its ground state. Finding the ground state of such a Hamiltonian, measuring its clock register and 
obtaining the result $L$ would then allow us to obtain $\ket{\varphi_L}$, the result 
of the quantum circuit $U$ applied to the initial state $\ket{\varphi_0}$ of the work register. This approach produced many results about the quantum complexity 
(in particular, $\QMA$-completeness) of finding the ground state properties of simple Hamiltonians \cite{Kitaev:book,lh:KR03,lh:KKR06,lh:OT08,Nagaj:07a,GottesmanLine:07,ERtriangle}.

A second direction of investigation in these models \cite{JW:05, VC-QCA:07, NW:08, AQC:Mizel} is to look at the Hamiltonian dynamics instead of the static ground state properties. The Hamiltonians in these {\em Hamiltonian Quantum Computing} (HQC) models 
generate Schr\"odinger time evolution which brings an easily-prepared initial state towards the history state \eqref{history} or its variants. 
As we show in Section \ref{clocksection}, one can extract the result of the computation $U\ket{\varphi_0}$ from the final state with high probability. This results in $\BQP$ universality -- any problem which is efficiently solvable on a quantum computer
can be solved in the HQC model as well.
%If the Hamiltonian of a HQC is translationally invariant \cite{NW:08}, the model can be called a Hamiltonian quantum cellular automaton. 
Another example utilizing the preparation of the state \eqref{history} is the result of Aharonov et. al. \cite{Aharonov:07a}, 
showing that Adiabatic Quantum Computation 
\cite{model:adiabatic} is polynomially equivalent to the standard circuit model. 

In this paper, we take the second point of view, focusing on the universality of the dynamical properties of a class of Hamiltonians. The desirable properties for such a HQC model are low locality (a Hamiltonian composed of terms which act nontrivially on a few qubits at a time), a simple geometry of interactions, low dimensionality of particles involved (qubits or qudits), time invariance (or slow change with time), translational-invariance and a large eigenvalue gap above the ground state energy
(or above the computational subspace) to protect the computation against decoherence.
We achieve this last property by using a {\em frustration-free} Hamiltonian
-- a sum of positive semidefinite terms, whose ground state is the common ground state of all the terms.

Our goal is to find a universal HQC with a low locality of interactions, using a Hamiltonian that is time-independent and frustration-free.
Feynman's original construction requires 4-local terms (see Section \ref{clocksection} for details). We equip our quantum computer with a different, {\em railroad-switch} clock register. This results in an effective translation from the circuit model to a Hamiltonian (continuous-time) model of quantum computation with a simple, frustration-free, time-independent, now only 3-local Hamiltonian. Note, that to show the universality of our model, we do not need to use the adiabatic theorem, because our Hamiltonian is frustration-free (see Section \ref{AQCsection}).

Recently, Mizel et. al. \cite{AQC:Mizel} took an alternative approach with a construction using physical particles jumping down a grid, utilizing only 2-particle (but 4-site) interactions. For comparison, when we look at their construction encoded using qubits, the interactions become 4-local. 

After finishing the present article, we found out that \cite{Feynman:85} also contains a 3-local version of this computer, using a so-called {\em switch} primitive, unknown to the wider community. The details of the dynamics of Feynman's model with such a clock were then analyzed by DeFalco et al. in \cite{Falco1,Falco2}, later focusing on the entropy of the clock register in \cite{Falco3,Falco4}. Nevertheless, there are two things that make our re-discovery different from previous work. First, we 
in the 3-local construction we focus on choosing a random time for the final measurement, which requires less detailed analysis than the one in \cite{Falco1}. 
We also look at the problem from a modern quantum information perspective, and make connections of our work to other universality models such as adiabatic quantum computation. Second, we take a step further and make the clock construction a two-local one (for a qubit and a qutrit). This paves the way for a possible future new proof of universality of 2-local Hamiltonians without the requirement of large norm penalty terms dictated by perturbation theory gadgets as in \cite{lh:KKR06}.

The paper is organized as follows. First, in Section \ref{clocksection} we review the continuous-time quantum computing model based on Feynman's idea and the usual ways of encoding the clock register. We present our main innovation, the clock register with railroad-switch gadgets in Section \ref{switchsection}, and bound the required running time of the resulting 3-local HQC model in Section \ref{dynamicssection}. Next, we discuss how our result relates to the universality of Adiabatic Quantum Computation in Section \ref{AQCsection} and analyze the gap protecting the computational subspace in Section \ref{gapsection}.
Section \ref{section23} contains our second result, a universal HQC model involving only 2-local qubit-qutrit interactions.
Finally, in Appendix \ref{walksection} we prove a Lemma about the mixing of a 
continuous-time quantum walk on a cycle, utilized in Section \ref{dynamicssection} for the analysis of the required running time of our model.

%%%%%%%%%%%%%%%%%%%%%%%%%%%%%%%%%%%%%%%%%%%%%%%%%%%%%%%%%%%%%%%%%%
%%%%%%%%%%%%%%%%%%%%%%%%%%%%%%%%%%%%%%%%%%%%%%%%%%%%%%%%%%%%%%%%%%

\section{Feynman's Computer and Clock Constructions}
\label{clocksection}

\begin{figure}
	\begin{center}
	\includegraphics[width=4in]{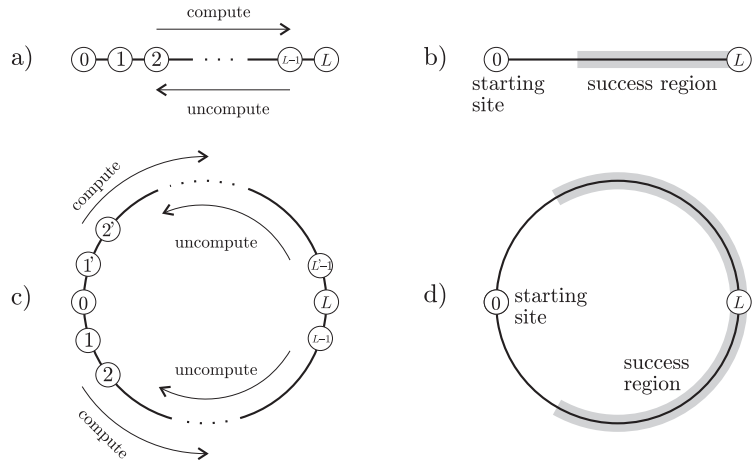} 
	\end{center}
	\caption{a) Computation on a ``line'' of states $\ket{\psi_t}$. b)
	The line corresponding to a circuit $U$ padded with extra identity gates. The region where the computation is done is marked. c) Computation on a cycle of states $\ket{\psi_t}$ coming from two copies of the computation on a line. d) The cycle corresponding to a padded circuit $U$.
	\label{figurecircle}}
\end{figure}
Already in 1985, Feynman \cite{Feynman:85} investigated the possibility of simulating a sequence of reversible classical gates in a quantum-mechanical system. The construction he found applies equally well to implementing a quantum circuit, which is also reversible. Let $U=U_L\cdots U_1$ \eqref{Ucircuit} be a sequence of $L$ unitary transformations $U_t$ on $n$ qubits. Consider a quantum system with two registers, $\cH_{work}\otimes\cH_{clock}$, where the first of these holds $n$ work qubits $q_1,\dots,q_n$. 
The role of the clock register is to hold a pointer - indicating how far the computation has progressed. 
A simple time-independent Hamiltonian then facilitates the evaluation of the circuit $U$ on the work register of this system. Consider the Hamiltonian
\begin{eqnarray}
	H_{F} =  \sum_{t=1}^{L} \left(
		   U_t \otimes \ket{t}\bra{t-1}_{c}
		+  U_t^\dagger \otimes \ket{t-1}\bra{t}_{c} \right),
	\label{HFeynman1}
\end{eqnarray}
where $U_t$ acts on the work register.
The first type of transition given by this Hamiltonian increases the clock register from $\ket{t-1}_c$ to $\ket{t}_c$, while applying the unitary $U_t$ to the corresponding work qubits. 
Similarly, the other transition decreases the clock register and uncomputes the unitary $U_t$.
Fixing an initial state $\ket{\varphi_0}_w$ of the work qubits
and taking $t=0,\dots,L$, we can define a basis
\begin{eqnarray}
	\ket{\psi_t} = \underbrace{\big(U_t \dots U_2 U_1 \ket{\varphi_0}_w\big)}_{\ket{\varphi_t}_w} \otimes \ket{t}_{c}
	%= \ket{\varphi_t}_w \otimes \ket{t}_{c}.
	\label{psit}
\end{eqnarray}
of a particular subspace $\cH_{\varphi_0}$ of the whole Hilbert space $\cH_{work}\otimes\cH_{clock}$. 
Within each such subspace, Feynman's Hamiltonian has a very simple form:
\begin{eqnarray}
	H_{F}\big|_{\cH_{\varphi_0}} = 
	 \left[\begin{array}{rrrrrr}
		0 	& 1	& 		&  	&		&					\\
		1 	& 0	&	1 	&  	&		&					\\
				& 1	&   	&   &		&					\\
		 		& 	&  	& \ddots	&   &					\\
		 		&	  & 		&		&  	&	1	\\
		 		&	  & 		&	  & 1 &	0	
	\end{array}\right].
	\label{nicebasis}
\end{eqnarray}
It is a Hamiltonian of a quantum walk on the ``line'' of states 
$\ket{\psi_t}$, as each state $\ket{\psi_t}$ is connected by a transition in $H_F$ only 
to its direct neighbors (see Figure \ref{figurecircle}).
%We initialize the system in the state 
%$\ket{\psi_0} = \ket{\varphi_0}_w \otimes \ket{0}_{c}$
%where $\ket{\varphi_0}_w$ is an initial state of the work qubits. 
Time evolution with $H_{F}$ according to the Schr\"odinger equation thus brings the initial state $\ket{\psi_0} = \ket{\varphi_0}_w \otimes \ket{0}_c$ into some superposition of the states $\ket{\psi_t}$ \eqref{psit}.
After some time $\tau$, let us measure the clock register. When we obtain $L$, the state of the system after the measurement becomes $\ket{\psi_L} = \ket{\varphi_L}_w \otimes \ket{L}_{c}$. The work register then contains the desired output of the circuit $U$. 
One might argue that this doesn't happen too often. However, we can boost the probability of obtaining $\ket{\psi_L}$ by slightly modifying our system. First, we pad the circuit with $2L$ extra identity operations
\begin{eqnarray}
	U' = \ii_{3L} \ii_{3L-1} \dots \ii_{L+1} U,
	\label{padding}
\end{eqnarray}
and, accordingly, expand the clock register to hold states $\ket{0}_c,\dots,\ket{3L}_c$ (see Figure \ref{figurecircle}b).
All the states $\ket{\varphi_t}_w$ of the work register with $t\geq L$ are now the same -- equal to the desired output state $\ket{\varphi_L}_w$.
We now let the system evolve from $\ket{\psi_0}$ for a time $\tau$ chosen uniformly at random between $0$ and $O(L\log^2 L)$. Because the quantum walk on a line is rapidly mixing, the probability to measure a state with the clock register $\ket{t\geq L}_{c}$ at time $\tau$ is close to $\frac{2}{3}$, as we show in Appendix \ref{walksection}. This yields the desired state $\ket{\varphi_L}_w$ of the work qubits. If a measurement of the clock register gives us $t<L$, we repeat the experiment.

Note also that we can turn the Feynman Hamiltonian \eqref{HFeynman1} into a sum of projector terms. To do this, we change the sign of $H_F$ and add terms of the type $\ket{t-1}\bra{t-1}_c$ and $\ket{t}\bra{t}_c$ as 
\begin{eqnarray}
		H_{F}^{proj} 
		%= \ii - \frac{1}{2}\left(\ket{0}\bra{0}_c + \ket{L}\bra{L}_c\right)
		%- \frac{1}{2} H_F
		= \sum_{t=1}^{L} H_{proj}^{(t)} 
		=  \sum_{t=1}^{L} \frac{1}{2}\left( \ket{t-1}\bra{t-1}_c + \ket{t}\bra{t}_c - 
		   U_t \otimes \ket{t}\bra{t-1}_{c}
		- U_t^\dagger \otimes \ket{t-1}\bra{t}_{c} \right).
	\label{HFeynmanP}
\end{eqnarray}
Just as in \eqref{nicebasis},
the Hamiltonian $H_{F}^{proj}$
restricted to a subspace spanned by the basis 
\eqref{psit} has a simple matrix form
\begin{eqnarray}
	H_{F}^{proj}\big|_{\varphi_0} &=& 
	\half
	\left[\begin{array}{rrrrrr}
		1 	& -1	& 				&  				&					&					\\
		-1 	& 2				&	-1 	&  				&					&					\\
						& -1	& 2      	& -1  &					&					\\
		 				& 			  & -1 	& \ddots	& \ddots  &					\\
		 				& 			  &  				&	\ddots	& 2			 	&	-1	\\
		 				& 			  &  				&	      	& -1 	&	1		
	\end{array}\right].
	\label{HFeynmanPmatrix}
\end{eqnarray}
%Note that this matrix is the same as the Hamiltonian of $L+1$ spins on a line with interaction 
%\begin{eqnarray}
%	H_{\frac{1}{2}} &=& \sum_{j=1}^{L} \frac{1}{2} \left(
%	 \ket{00}\bra{00}+\ket{11}\bra{11}-\ket{10}{01}-\ket{01}\bra{10}\right).
%\end{eqnarray}
Observe that as the rows sum to zero, 
for any initial state of the work qubits $\ket{\varphi_0}_w$, the corresponding history state $\frac{1}{\sqrt{L+1}}\sum_{t=0}^{L}\ket{\psi_t}$ \eqref{history} is the ground state of $H_F^{proj}$, with energy equal to zero. Again, $H_{F}^{proj}$ is a Hamiltonian of a quantum walk on a line,
although now with a small modification at the endpoints and a shifted spectrum:
\begin{eqnarray}
		H_{F}^{proj} 
		= \ii 
		- \frac{1}{2} H_F
		- \frac{1}{2}\left(\ket{0}\bra{0}_c + \ket{L}\bra{L}_c\right).
\end{eqnarray}
This is the form of the Hamiltonian which Kitaev used in \cite{Kitaev:book} (along with additional terms) to prove $\QMA$-completeness of the Local Hamiltonian problem.
Finally, note that we can get rid of the endpoint terms by wrapping the computation around on a 
circle as in Figure \ref{figurecircle}c.

\subsection{Implementing the Clock Register}
\label{implementclocks}

So far, we haven't said anything about how the clock register is actually implemented. 
Its states should indicate where in the computation we are, encoding the $L+1$ different states $\ket{t}_c$. The first na\"ive approach is to use a binary encoding of $t$ which would require $\log (L+1)$ qubits. However, the transition operators $\ket{t}\bra{t-1}_c$ in the Feynman Hamiltonian would then be highly nonlocal. It is much better to use the idea Feynman originally had in mind. Imagine the clock register as the state space of a pointer particle hopping on a line $0,\dots,L$, with the state $\ket{t}_c$ corresponding to the pointer positioned at $t$. This {\em pulse clock} uses $L+1$ clock qubits $c_0,\dots,c_L$, representing the states $\ket{t}_c$ as
\begin{eqnarray}
	\ket{0}_c &=& \ket{\mathtt{1000\dots 00}}, \label{pulseclock} \\
	\ket{1}_c &=& \ket{\mathtt{0100\dots 00}}, \nonumber \\
	\ket{2}_c &=& \ket{\mathtt{0010\dots 00}}, \nonumber \\
	&\vdots& \nonumber\\
	\ket{L}_c &=& \ket{\mathtt{0000\dots 01}}, \nonumber
\end{eqnarray}
with a single clock qubit (indicating the pointer) in the state $\ket{\mathtt{1}}$. 
What is the reason to take such a generous approach, using only $L+1$ of the
$2^{L+1}$ states available in the state space of the pulse clock register?
For the pulse clock, the transition operators in \eqref{HFeynman1} acting on the clock register can be implemented just 2-locally, acting nontrivially on only two clock qubits and as an identity on the rest of the system:
\begin{eqnarray}
	\ket{t}\bra{t-1}_{c} = \ii%_{c_0,\dots,c_{t-1}} 
													\otimes \ket{\mathtt{01}}\bra{\mathtt{10}}_{c_{t-1},c_{t}}
													\otimes \ii%_{c_{t+1},\dots,c_{L}}
													.
	\label{transitionpulse}
\end{eqnarray}
Two-qubit unitaries $U_t$ are necessary for universal quantum computation, so the Hamiltonian \eqref{HFeynman1} with the pulse clock encoding contains 4-local terms. 
Our main result, given in Section \ref{switchsection}, uses a pulse-type clock and implements \eqref{HFeynman1} just 3-locally.
Note though, that there are two problems with the pulse clock. First, it is not entirely simple to locally check the clock register for correctness (ruling out the states with two neighboring $\mathtt{1}$'s, but also with two far-away $\mathtt{1}$'s). Second, the pulse clock requires initialization -- ensuring that the clock register {\em has} a pointer (a spin in the state $\ket{\mathtt{1}}$) at all. This can't be done by using only local positive semidefinite terms in the Hamiltonian, which makes the pulse clock unsuitable for obtaining $\QMA_1$-completeness results such as %\cite{Kitaev:book,lh:KR03,lh:KKR06,lh:OT08,Nagaj:07a,GottesmanLine:07,ERtriangle}.
\cite{Qksat,Nagaj:07a,ERtriangle}.

Another way of encoding the states $\ket{t}_c$ is a {\em domain wall clock}, using a unary representation of $\ket{t}_c$, introduced by Kitaev \cite{Kitaev:book} for his Local Hamiltonian problem. 
The progression of states (on $L+2$ clock qubits) is
\begin{eqnarray}
	\ket{0}_c &=& \ket{\mathtt{1000\dots 000}}, \label{wallclock} \\
	\ket{1}_c &=& \ket{\mathtt{1100\dots 000}}, \nonumber \\
	\ket{2}_c &=& \ket{\mathtt{1110\dots 000}}, \nonumber \\
	&\vdots& \nonumber\\
	\ket{L}_c &=& \ket{\mathtt{1111\dots 110}}, \nonumber
\end{eqnarray}
where the ``clock time'' (or alternatively, pointer position) $t$ is determined by the position of the domain wall $\mathtt{10}$.
A useful property of the domain wall clock is that we can check whether the state of the clock register is in one of the states $\ket{t}_c$ \eqref{wallclock} by 2-local operators. What suffices is to ensure that the sequence $\mathtt{01}$ never occurs. Moreover, we can ensure that there is a single domain wall $\mathtt{10}$ in the clock register by imposing that the first qubit is fixed to $\mathtt{1}$ and that the last qubit is $\mathtt{0}$.
This solves the initialization problem of the pulse clock (where we can't rule out the state $\ket{\mathtt{00\cdots00}}_c$ by simultaneously satisfiable local positive-semidefinite operators).
With the domain wall clock, the transition operators in $H_F$ \eqref{HFeynman1} are 3-local, as we need 
\begin{eqnarray}
	\ket{t}\bra{t-1}_{c} = \ii%_{c_0,\dots,c_{t-1}} 
													\otimes \ket{\mathtt{110}}\bra{\mathtt{100}}_{c_{t-1},c_{t},c_{t+1}}
													\otimes \ii%_{c_{t+2},\dots,c_{L}}
													.
\end{eqnarray}
Recall that we want to increase the clock register while appling two-qubit unitaries. Feynman's Hamiltonian with the domain wall clock is thus 5-local.
If one uses transition operators with smaller locality (as in \cite{lh:KR03,lh:KKR06}), transitions to illegal clock states (states of the clock register other than the prescribed $\ket{t}_c$) will occur in the Hamiltonian time evolution with $H_F$. This needs to be dealt with by imposing large energy penalty terms in the Hamiltonian. Moreover, the Hamiltonian will no longer be frustration-free.
For this reason, the domain wall clock does not fit our present purpose: finding a simple, low-locality, frustration-free Hamiltonian with a small norm, allowing us to do universal quantum computation by the time-evolution it generates.

Another approach is to combine the two clock constructions above by interspersing a domain wall clock lattice with a pulse clock lattice. In this fashion, Nagaj and Mozes \cite{Nagaj:07a} and Eldar et al. \cite{ERtriangle} showed $\QMA$-completeness of the 3-local Hamiltonian problem \cite{footnoteLH} with a much larger promise gap.

Going beyond the linear structure of the pulse/domain wall clocks, Eldar et al. \cite{ERtriangle} have recently come up with a novel idea. They use a progression of states of the clock register that can go from $\ket{t-1}_c$ to $\ket{t}_c$ by two different routes, depending on the state of one of the work qubits. This results in their proof that the Quantum 2-SAT problem \cite{Qksat} for a cinquit-qutrit pair (each Q-2-SAT condition acts on one 5-dimensional and one 3-dimensional quantum system) is $\QMA_1$-complete \cite{ERtriangle}. In the next section, we take the nonlinear route as well, construct a novel, pulse-type clock and 
make the Feynman Hamiltonian 3-local (as opposed to 4-local with the usual pulse clock). What makes our construction different from \cite{ERtriangle} is a bit-flip symmetry which allows us to easily analyze the system's dynamics and prove its universality.

%%%%%%%%%%%%%%%%%%%%%%%%%%%%%%%%%%%%%%%%%%%%%%%%%%%%%%%%%%%%%%%%%%%%%%%%%%%%%%%%%%%%%%%%%%
%%%%%%%%%%%%%%%%%%%%%%%%%%%%%%%%%%%%%%%%%%%%%%%%%%%%%%%%%%%%%%%%%%%%%%%%%%%%%%%%%%%%%%%%%%

\section{The Railroad Switch: a 3-local Gadget}
\label{switchsection}

For a given quantum circuit $U$, our goal is to construct a Hamiltonian which facilitates time evolution
of a particular initial state towards a state very much like \eqref{history}, encoding the result of $U$. Specifically, we focus on the clock register encoding of the Feynman Hamiltonian \eqref{HFeynman1}.
We base our construction on a pulse clock \eqref{pulseclock}. We can view the progression of the pointer $\mathtt{1}$ in the clock states $\ket{t}_c$ as a train moving along a single track as in Figure \ref{figurepulse}. As it goes between stations $c_{t-1}$ to $c_t$, one-qubit or two-qubit unitary operations $U_t$ are applied to the work qubits.
\begin{figure}
	\begin{center}
	\includegraphics[width=2.5in]{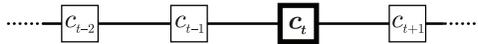} 
	\end{center}
	\caption{The {\em pulse clock} is a line (a single track) of $L+1$ clock qubits with a single active qubit (the train) in the state $\ket{\mathtt{1}}$. When the clock qubit $c_{t}$ is active,
	%(and all other clock qubits are inactive), 
	the pulse clock encodes the state $\ket{t}_c$.
	\label{figurepulse}}
\end{figure}
For any single qubit gate, the operator $U_{t}\otimes \ket{t}\bra{t-1}_c$ is already 3-local \eqref{transitionpulse}.
Besides single-qubit gates, all that is necessary for universality is the ability to apply the 2-qubit CNOT gate. 
As we have seen in the previous Section, 4-local interactions are required for this in the standard Feynman Hamiltonian construction.
To do it with only 3-local interactions, we have to add a twist to the pulse clock,
going beyond its linear structure. 
For each CNOT gate, instead of viewing the clock register as a train on a single track as in Figure \ref{figurepulse}, we introduce
a railroad switch gadget with two tracks, depicted in Figure \ref{figureswitch}.
The train will use the upper or lower track depending on the state of a `train master' $q_1$ -- one of the work qubits. Only on the upper track, the target work qubit $q_2$ is flipped. The gadget thus facilitates the application of a CNOT gate on the work qubits $q_1$ and $q_2$.
\begin{figure}
	\begin{center}
	\includegraphics[width=3.6in]{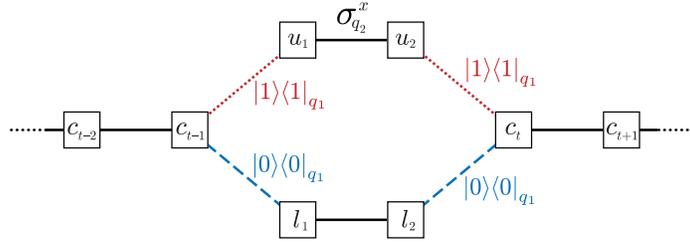} 
	\end{center}
	\caption{The {\em railroad switch} gadget for the application of a CNOT gate between work qubits $q_1$ and $q_2$, using only 3-local terms. 
	Comparing to the pulse clock in Figure \ref{figurepulse}, there are four extra clock qubits inserted between $c_{t-1}$ and $c_{t}$, creating two tracks. The routing of the train (active site) from $c_{t-1}$ (and $c_{t}$) to the upper and lower tracks is conditioned on the state of the train master -- a control qubit $q_1$ in the work register. Furthermore, when the train moves from $u_1$ to $u_2$ on the upper track, the target work qubit $q_2$ is flipped. 
	\label{figureswitch}}
\end{figure}

Our resulting Hamiltonian $H_{3S}$ is composed from positive semidefinite 3-local terms,
so it is a Quantum 3-SAT Hamiltonian. 
We prove the universality of this 3-local Hamiltonian Computer model where we let an easily prepared starting state evolve with $H_{3S}$ for not too long a time, and also for an Adiabatic Quantum Computation (AQC) where we slowly change the Hamiltonian from a simple starting Hamiltonian $H_B$ to $H_{3S}$. We discuss this connection to AQC in detail in Section \ref{AQCsection}.

Let us present the details of the railroad switch. The original pulse clock has $L+1$ qubits $c_0,\dots,c_L$. When a CNOT gate is due at step $t$ (i.e. $U_t$=CNOT), we introduce four extra clock qubits ($u_1,u_2,l_1,l_2$) between
$c_{t-1}$ and $c_t$ as in Figure \ref{figureswitch}.
Let us call the control work qubit $q_1$, and the target qubit for the CNOT $q_2$.
If $q_1$ is in the state $\ket{1}$, the active site (the train) in the clock register can only follow the upper track. For $q_1$ in the state $\ket{0}$, it must take the lower track. 
Furthermore, when the train moves from $u_1$ to $u_2$, we flip the target qubit $q_2$. The train then moves to the clock qubit $c_{t}$ and the two tracks (computational paths) merge. 
This effectively applies a CNOT gate between $q_1$ and $q_2$. 
The transition operators for the upper track are now all 3-local:
\begin{eqnarray}
	\ket{1}\bra{1}_{q_1} &\otimes& 
 			\big(\ket{u_1}\bra{c_{t-1}} + \ket{c_{t-1}}\bra{u_1}\big), \label{Hu1}\\ 
	 \hskip22pt \sigma_{q_2}^{x} &\otimes& 
			\big( \ket{u_2}\bra{u_1} + \ket{u_1}\bra{u_2} \big), \label{Hu2}\\
	 \ket{1}\bra{1}_{q_1} &\otimes& 
 			\big(\ket{c_t}\bra{u_2} + \ket{u_2}\bra{c_t}\big), \label{Hu3}
\end{eqnarray}
because the operators $\ket{u_1}\bra{c_t}$, etc. can be implemented for our pulse-type clock only 2-locally as
\begin{eqnarray}
	\ket{u_1}\bra{c_{t-1}} &=& \ket{\mathtt{10}}\bra{\mathtt{01}}_{c_{t-1},u_1}, \\ 
	\ket{u_1}\bra{u_2} &=& \ket{\mathtt{10}}\bra{\mathtt{01}}_{u_1,u_2}, \\ 
	\ket{u_2}\bra{c_{t}} &=& \ket{\mathtt{10}}\bra{\mathtt{01}}_{u_2,c_{t}}, 
\end{eqnarray}
acting trivially on the rest of the clock qubits. 
The Hamiltonian for the upper track is then a sum of \eqref{Hu1}-\eqref{Hu3}. If we desire so, we have the option of making it a sum of projectors by adding a few terms as in \eqref{HFeynmanP}. The projector version of the Hamiltonian for the upper track is then a sum of three 3-local projectors:
%\begin{eqnarray}
%	H_{up}^{(t)} &=& \ket{1}\bra{1}_{q_1} \otimes \frac{1}{2} 
%		\big(
%		\ket{c_{t-1}}\bra{c_{t-1}} + \ket{u_1}\bra{u_1} 
%		- \ket{u_1}\bra{c_{t-1}} - \ket{c_{t-1}}\bra{u_1}
%		\big) \\ 
%			&+&  \phantom{\ket{1}\bra{1}_{q_1} \otimes } \,
%			\frac{1}{2} \big(
%		\ket{u_1}\bra{u_1} + \ket{u_2}\bra{u_2} 
%		- \sigma_{q_2}^{x} \otimes \ket{u_2}\bra{u_1} - \sigma_{q_2}^{x} \otimes  \ket{u_1}\bra{u_2}  
%		\big) \nonumber \\
%	&+&  \ket{1}\bra{1}_{q_1} \otimes \frac{1}{2} 
%		\big(
%		\ket{u_2}\bra{u_2} + \ket{c_{t}}\bra{c_{t}} 
%		- \ket{u_2}\bra{c_{t}} - \ket{c_{t}}\bra{u_2}
%		\big). \nonumber 
%\end{eqnarray}
\begin{eqnarray}
	H_{up}^{(t)} &=& \ket{1}\bra{1}_{q_1} \otimes \frac{1}{2} 
		\big(
		\ket{c_{t-1}}\bra{c_{t-1}} + \ket{u_1}\bra{u_1} 
		- \ket{u_1}\bra{c_{t-1}} - \ket{c_{t-1}}\bra{u_1}
		\big) \\ 
			&+&  \phantom{\ket{1}\bra{1}_{q_1} \otimes } \,
			\frac{1}{2}\, 
			\ii_{q_2}\otimes
			\big(
		\ket{u_1}\bra{u_1} + \ket{u_2}\bra{u_2} 
		\big)
		-\frac{1}{2}\,\sigma_{q_2}^{x}\otimes\big(
		\ket{u_2}\bra{u_1} - \ket{u_1}\bra{u_2}  
		\big) \nonumber \\
	&+&  \ket{1}\bra{1}_{q_1} \otimes \frac{1}{2} 
		\big(
		\ket{u_2}\bra{u_2} + \ket{c_{t}}\bra{c_{t}} 
		- \ket{u_2}\bra{c_{t}} - \ket{c_{t}}\bra{u_2}
		\big). \nonumber 
\end{eqnarray}
Note that when the clock register is in a proper clock state, i.e. one and only one clock qubit is in the state $\ket{\mathtt{1}}$, the projectors such as $\ket{c_t}\bra{c_t}$ are easily implemented 1-locally as $\ket{\mathtt{1}}\bra{\mathtt{1}}_{c_t}$.

Analogously, the transition operators for the lower track are
\begin{eqnarray}
	\ket{0}\bra{0}_{q_1} &\otimes&
 			\big(\ket{l_1}\bra{c_{t-1}} + \ket{c_{t-1}}\bra{l_1}\big), \label{Hl1}\\ 
	\ii_{q_2} &\otimes&
			\big( \ket{l_2}\bra{l_1} + \ket{l_1}\bra{l_2} \big), \label{Hl2}\\
	\ket{0}\bra{0}_{q_1} &\otimes&
 			\big(\ket{c_t}\bra{l_2} + \ket{l_2}\bra{c_t}\big). \label{Hl3}
\end{eqnarray}
Again, we can add a few terms and turn them into projectors, obtaining
\begin{eqnarray}
	H_{lo}^{(t)} &=&  \ket{0}\bra{0}_{q_1} \otimes \frac{1}{2} 
 			\big(
 			\ket{l_1}\bra{l_1} + \ket{c_{t-1}}\bra{c_{t-1}}
 			-	\ket{l_1}\bra{c_{t-1}} - \ket{c_{t-1}}\bra{l_1}
 			\big), \\ 
	&+& \phantom{\ket{0}\bra{0}_{q_1} \otimes} \,
		 \frac{1}{2} 
			\big( 
			\ket{l_1}\bra{l_1} + \ket{l_2}\bra{l_2}
			- \ket{l_2}\bra{l_1} - \ket{l_1}\bra{l_2} 
			\big), \nonumber \\
	&+& \ket{0}\bra{0}_{q_1} \otimes \frac{1}{2} 
	 			\big(
	 			\ket{l_2}\bra{l_2} + \ket{c_t}\bra{c_t}
	 			- \ket{c_t}\bra{l_2} - \ket{l_2}\bra{c_t}
	 			\big). \nonumber
\end{eqnarray}

We now replace each term $H_{proj}^{(t)}$ in \eqref{HFeynmanP}
for those $t$ where $U_t=$CNOT with the railway switch gadget Hamiltonian
\begin{eqnarray}
	H_{switch}^{(t)} = H_{up}^{(t)} + H_{lo}^{(t)},
	\label{Hrail}
\end{eqnarray}
obtaining 
\begin{eqnarray}
	H_{3S} &=& 
			\sum_{t: \textrm{1-qubit }U_t} H_{proj}^{(t)}
			+ \sum_{\textrm{CNOTs}} H_{switch}^{(t)}
		\label{h3s}
\end{eqnarray}

%%%%%%%%%%%%%%%%%%%%%%%%%%%%%%%%%%%%%%%%%%%%%%%%%%%%%%%%%%%%%%%%%%%%%%%%%%%%%%%%%%%%%%%%%%
%%%%%%%%%%%%%%%%%%%%%%%%%%%%%%%%%%%%%%%%%%%%%%%%%%%%%%%%%%%%%%%%%%%%%%%%%%%%%%%%%%%%%%%%%%

\subsection{The Dynamics of our Model}
\label{dynamicssection}

Let us now look at the time evolution generated by this Hamiltonian, focusing first on a single
railway switch. 
Denote the work qubits on which the CNOT gate acts $q_1$ and $q_2$ .
Consider the action of \eqref{Hrail} on the state 
\begin{eqnarray}
	\ket{\psi_{t-1}} 
		&=& \kets{\varphi_{t-1}}_w \otimes \ket{t-1}_c = \big(
		a \underbrace{\ket{0}_{q_1} \ket{\alpha}_{q_2,\dots}}_{\ket{\varphi_{0\dots}}} 
		+ b \underbrace{\ket{1}_{q_1} \ket{\beta}_{q_2,\dots}}_{\ket{\varphi_{1\dots}}}
		\big)
		\otimes \ket{c_{t-1}}_c,
\end{eqnarray}
where $\kets{\varphi_{t-1}} = a \ket{\varphi_{0\dots}} + b \ket{\varphi_{1\dots}}$ is a general state of the work register
and the clock register has the active site (train) at $c_{t-1}$.
While the train is within the railroad switch gadget, the transitions given in \eqref{Hrail} dictate that the state of the system does not leave the subspace spanned by the following four states:
\begin{eqnarray}
	\ket{\psi_{t-1}} 
		&=& \big(
		a \kets{\varphi_{0\dots}} + b \kets{\varphi_{1\dots}}
		\big)
		\otimes \ket{c_{t-1}}_c, \\
	\kets{\psi^{(1)}_{t-1}} 
	&=& 
		 a \kets{\varphi_{0\dots}}	\otimes \ket{l_1}_c
		+
		b \kets{\varphi_{1\dots}} \otimes \ket{u_1}_c, 
						\label{psi1step}		\\
	\kets{\psi^{(2)}_{t-1}}  
	&=& 
		 a \kets{\varphi_{0\dots}} \otimes \ket{l_2}_c
		+
		 b \kets{\varphi'_{1\dots}} \otimes \ket{u_2}_c, 
		 				\label{psi2step}	\\
	\kets{\psi_{t}} 
	&=& \big(
			 a 
			\kets{\varphi_{0\dots}} + 	b 
			\kets{\varphi'_{1\dots}}
		\big) 
		\otimes \ket{c_t}_c,
\end{eqnarray}
where $\kets{\varphi'_{1\dots}} = \sigma_{q_2}^{(x)} \ket{\varphi_{1\dots}}$. Observe that the work qubits of the state $\kets{\psi_{t}}$ have the CNOT gate applied to $q_1,q_2$, as we wanted, meaning that
\begin{eqnarray}
	\kets{\psi_{t}} 
	&=& 
		\textrm{CNOT}_{q_1,q_2}\ket{\varphi_{t-1}}_w
	\otimes\ket{t}_c.
\end{eqnarray}
This gives us a method of applying a 2-local CNOT gate while updating the clock register at the same time -- via only 3-local interactions between the work and clock registers.

Before introducing the railroad switch, the progression of states $\dots$,
$\ket{\psi_{t-2}}$, $\ket{\psi_{t-1}}$, $\kets{\psi_{t}}$, $\ket{\psi_{t+1}}$, $\dots$ was a ``line'', as 
each state in it was connected only to its two nearest neighbors by a transition in $H$. Now, 
with the transitions given by \eqref{h3s}, 
the modified progression of states $\dots$, $\ket{\psi_{t-1}}$, $\kets{\psi_{t-1}^{(1)}}$, $\kets{\psi_{t-1}^{(2)}}$, $\ket{\psi_t}$, $\dots$ again forms a ``line''. 
Let us call 
\begin{eqnarray}
	\cH_{legal} = span\{ \ket{\psi_t} \}
	\label{legal}
\end{eqnarray}
the subspace of $\cH$ spanned by this new progression of states, including those of the form \eqref{psi1step} and \eqref{psi2step}. Because we constructed it so, $H_{3S}$ \eqref{h3s} does not induce transitions between the subspace $\cH_{legal}$ and $\cH_{legal}^\perp$. The time evolution of an initial state in $\cH_{legal}$ is thus governed solely by the restriction of $H_{3S}$ to $\cH_{legal}$.
Moreover, in the new, updated basis $\ket{\psi_t}$ this restriction has the form \eqref{HFeynmanPmatrix} which we already know. It is a Hamiltonian for a quantum walk on a line of length $L+1 = poly(n)$, where $n$ is the number of qubits the quantum circuit $U$ acts on.
The railroad switch requires three steps for each CNOT gate, so $L=L_1+3L_{CNOT}$, where $L_1$ is the number of single qubit gates and $L_{CNOT}$ is the number of CNOTs in the circuit $U$. 
To avoid complications in the analysis of the required running time coming from the endpoints, we can change the `line' of states $\ket{\psi_t}$ into a circle of length $2L = 2 (L_1+3L_{CNOT})$ as in Figure \ref{figurecircle}c. First, we double the clock register as
\begin{eqnarray}
	\ket{c_0 c_1 \dots c_L} \goes \ket{c_0 c_1 \dots c_L} \otimes \ket{c_{0'} c_{1'} \dots c_{L'}},
\end{eqnarray}
where the Hamiltonian involving the second clock register has the same form \eqref{h3s}. Second, we identify the endpoint qubits, i.e. $c_0 \equiv c_{0'}$ and $c_L \equiv c_{L'}$. This gives the clock register the geometry of a cycle, with a unique state 
\begin{eqnarray}
	\ket{0}_c^{\circ} = \ket{\mathtt{1}_{c_{0}} 
	\begin{array}{c}
		\mathtt{0} \dots \mathtt{0} \\
		\mathtt{0} \dots \mathtt{0} 
	\end{array}
	\mathtt{0}}
	\label{circleclockinit}
\end{eqnarray}
corresponding to $t=0$. The active site (spin up) in the clock register can proceed towards $c_L$ both ways, as in Figure \ref{figurecircle}c.
The span of this new set of states $\ket{\psi_t}$ with the geometry of a circle then defines the subspace $\cH_{legal}^\circ$. 
When we restrict the new Hamiltonian \eqref{h3s} including the terms involving the clock qubits $c_{t'}$
to the subspace $\cH_{legal}^{\circ}$ and express it in the basis $\{\ket{\psi_t}\}$, we get 
\begin{eqnarray}
	H_{3S}\Big|_{\cH_{legal}^\circ} = \half \left(\ii - B_\circ\right),
\end{eqnarray} 
where $B_\circ$ is the adjacency matrix for a cycle of length $2L$. The dynamics of this system is a~quantum walk on a~cycle, which we analyze in detail in Appendix \ref{walksection}. 
There we prove 
\begin{lem2}
Consider a continuous time quantum walk on a cycle of length $L$ (divisible by 4), where the Hamiltonian is the negative of the adjacency matrix for the cycle. Let the system evolve for a time $\tau \leq \tau_{\circ}$ chosen uniformly at random, starting in a position basis state $\ket{0}$.
The probability to measure a position state $\ket{t}$ farther than $L/6$ from the starting point (the farther two thirds of the cycle) is then bounded from below as $p_{\circ}\geq\frac{2}{3}-\frac{1}{3L}-O\left(\frac{L \log^2 L}{\tau_{\circ}}\right)$. 
\end{lem2}
Thus, when starting from the state concentrated on a single site of the line and letting the sytem evolve for a time chosen uniformly at random between zero and a number not larger than $O(L \log^2 L)$, the probability to find the state farther than $L/3$ from the starting point is close to $\frac{2}{3}$.
This corresponds to finding a state with a spin up on clock qubit $c_{t\geq L/3}$ (lower part of the circle) or $c_{t'\geq L/3}$ (upper part of the circle), depicted as the success region in Figure \ref{figurecircle}d. 
In both cases, because the circuit $U$ was padded with identity gates as in \eqref{padding}, the state of the work qubits is then $\ket{\varphi_L}$ \eqref{psit} and contains the output of the quantum circuit $U$. 

%\subsection{The Hamiltonian Quantum Computer}
%\label{HQC}

Therefore, the Hamiltonian \eqref{h3s}
allows us to simulate a quantum computation $U$ with $L_1+L_{CNOT}$ gates (here $L_1$ is the number of single qubit gates and $L_{CNOT}$ counts the CNOT's), as follows:
\begin{enumerate}
\item Construct a pulse clock register with $C = L_1 + L_{CNOT}+1$ qubits. For each CNOT gate, add 4 extra qubits for the railroad switch gadget 
and modify the Hamiltonian as described by \eqref{Hrail}.
\item Pad the clock register with extra $2(L_1 + L_{CNOT})$ qubits and pad the circuit $U$ with identity gates to make it three times longer.
\item Double the clock register and join its endpoints.
\item Initialize the system in the state 
		\begin{eqnarray}
			\ket{\psi_0} = \ket{00\dots 0}_w\otimes \ket{0}_c^{\circ}.
			\label{psi0}
		\end{eqnarray}
\item Let it evolve with $H_{3S}^{\circ}$ (given by \eqref{h3s} including terms for the whole clock cycle) for a time chosen uniformly at random between zero and
$\tau\leq O(L\log^2 L)$. 
\item Measure the clock register qubits in the success region (farther than $L/3$ from the initial site, see Figure \ref{figurecircle}d). With probability close to $\frac{2}{3}$, we will find the active site (spin up) there. The work register now contains the output of the quantum circuit $U$. Restart otherwise.
\end{enumerate}

The Quantum 3-SAT problem (``Does a common zero-energy ground state of a sum of 3-local projectors exist?'') is certainly NP-complete, but we still do not know whether it is also $\QMA_1$-complete \cite{Qksat}. Our railroad switch clock is not suitable for proofs of 
%$\QMA$ and 
$\QMA_1$ hardness of Quantum-$k$-SAT, because it requires initialization (see Section \ref{clocksection}).
However, in this work we found that time-independent Quantum 3-SAT Hamiltonians
can be universal in the following sense. 
We constructed $H_{3S}^{\circ}$: a frustration-free, Quantum 3-SAT Hamiltonian (a sum of $O(L)$ 3-local projector terms) powerful enough to perform universal quantum computation in the Hamiltonian Quantum Computer model based on a quantum walk. When the number of qubits and gates involved in the circuit is $L$, what we need for our Quantum 3-SAT Hamiltonian computer are thus $O(L)$ qubits, a Hamiltonian with $O(L)$ projector terms with norm $O(1)$, resulting in $\norm{H}=O(L)$, and a running time of order $O(L\log^2 L)$. The rescaled required resources (time $\times$ energy) scale as $\tau \cdot \norm{H} = O(L^2 \log^2 L)$.

Furthermore, we show in Section \ref{gapsection} that the computation is protected by an energy gap scaling like $O(L^{-1})$ between the legal subspace $\cH_{legal}^{\circ}$ and the subspaces orthogonal to it, with the exception of the computationally dead subspace
without a train (active site in the clock register). This energy barrier could be increased (at the cost of increasing the locality of interactions) by using error correcting codes like the ones developed for AQC by Jordan et al. \cite{AQC:codes}.

Although local (in the number of particles in each interaction term), Feynman-like Hamiltonians are not geometrically local. Therefore, we would like to continue looking for Hamiltonians with geometrically local interactions (e.g. on a 2D grid), generating time evolutions towards a history state \eqref{history} or its variants, or usable in an AQC algorithm. Oliveira and Terhal \cite{lh:OT08} found such a 2-local Hamiltonian on a 2D grid. The downside of their construction is that the Hamiltonian 
doesn't keep the computational subspace invariant
(it is also not is not frustration-free, with ground states which are history states).
It thus doesn't generate evolution directly towards \eqref{history}. Moreover, because it is based on perturbation gadgets, the eigenvalue gap of the final rescaled Hamiltonian itself is a rather large inverse polynomial in $L$. Its use for AQC is thus impractical. Two of the current ways of approaching this problem are then using higher-dimensional particles (qudits) \cite{NW:08}, or higher locality of interactions \cite{AQC:Mizel}.

%%%%%%%%%%%%%%%%%%%%%%%%%%%%%%%%%%%%%%%%%%%%%%%%%%%%%%%%%%%%%%%%%%%%%%%%%%%%%%%%%%%%%%%%%%
%%%%%%%%%%%%%%%%%%%%%%%%%%%%%%%%%%%%%%%%%%%%%%%%%%%%%%%%%%%%%%%%%%%%%%%%%%%%%%%%%%%%%%%%%%

\subsection{HQC and Adiabatic Quantum Computation}
\label{AQCsection}

The model of computation we just described falls into a general category of Hamiltonian Quantum Computers (HQC) which initialize a system in a simple initial state, let it evolve for not too long a time, and end with a measurement of a few qubits (in the computational basis).
Our HQC requires initializing the system in the state $\ket{\psi_0}$ \eqref{psi0}, and then releasing it to evolve with the Hamiltonian $H_{3S}$. 
We now look at the initialization in more detail and investigate how fast can we peform the `release'. 
For the initialization, it is useful to think of $\ket{\psi_0}$ as the unique ground state
of the 2-local, initial Hamiltonian 
\begin{eqnarray}
	H_{init} = \underbrace{\ket{\mathtt{0}}\bra{\mathtt{0}}_{c_0} + \sum_{j\neq 0} \ket{\mathtt{1}}\bra{{\mathtt{1}}}_{c_j}}_{H_{clockinit}} +
	\underbrace{
	\sum_{k=1}^{n} \ket{1}\bra{1}_{q_k}\otimes \ket{\mathtt{1}}\bra{\mathtt{1}}_{c_0}}_{H_{datainit}}, 
\end{eqnarray}
which is diagonal in the computational basis.
While $H_{clockinit}$ prefers the clock register in the state $\ket{t=0}_c$, the term $H_{datainit}$
gives an energy penalty to states whose work register is not properly initialized, i.e. all zero when the clock is in $\ket{c_0}_c$.
To let the HQC run, we turn off $H_{init}$ and turn on $H_{3S}$. One way to do it is linearly as
\begin{eqnarray}
	H(t\leq T_{init}) = \left(1-\frac{t}{T_{init}}\right) H_{init} + \frac{t}{T_{init}} H_{3S}.
	\label{AQCham}
\end{eqnarray}
%The time $T_{init}$ required for turning off the initial Hamiltonian and turning on the HQC Hamiltonian can be large or small. 
For our choice of $H$ and a large enough $T_{init}$, the conditions of the adiabatic theorem (see e.g. \cite{AQC:Lidar}) could hold. This model would then become an Adiabatic Quantum Computation (AQC) \cite{model:adiabatic}, where the system stays close to the instantaneous ground state of the time-dependent Hamiltonian $H(t)$. At $t=T_{init}$, the system would thus be close to the ground state of $H_{3S}$ -- the history state \eqref{history}. 

Because of the energy gap, the AQC model has some protection against noise during the computation. The robustness and fault tolerance of the AQC has been recently investigated by Childs et al. \cite{AQC:Childs}, Jordan et al. \cite{AQC:codes}, Lidar \cite{AQC:Lidar} and Lloyd \cite{AQC:Lloyd}. The challenge today is to find more new, entirely adiabatic algorithms, such as the adiabatic state preparation of Aharonov and Ta-Shma \cite{AQC:AharonovPrepare}.

In \cite{Aharonov:07a}, Aharonov et al. proved that Adiabatic Quantum Computation is polynomially equivalent to the standard quantum circuit model of quantum computation.
This proof has two directions. First, as shown before by Farhi et. al. \cite{FarhiAQC}
we can simulate time evolution with an AQC Hamiltonian on a standard quantum computer, dividing the time into small intervals and approximating the time evolution in each slice. The usual procedure utilizes the finite $n$ approximation of the Trotter-Suzuki formula and the resulting unitary transformations are applied using a circuit-based quantum computer. Second, \cite{Aharonov:07a} shows that any quantum computation can be simulated by AQC with $T_{init}$ polynomial in $L$ (the number of gates in the circuit $U$). For this, they utilize a Feynman-type Hamiltonian similar to \eqref{HFeynman1}, with a particular encoding of the clock register. The key element is a lower bound on the eigenvalue gap of the Hamiltonian \eqref{AQCham} for all $t$ -- an inverse polynomial in $L$. 

However, in \cite{AQC:Lloyd}, Lloyd has shown that a fast passage from $H_{init}$ to the final Hamiltonian works just as well. Let us consider the extreme case -- an almost instantaneous change in the Hamiltonian. The state of the system then stays close to the initial state. The system then continues to evolve according to the final Hamiltonian. As we have seen in Section \ref{dynamicssection}, this Hamiltonian evolves the initial state within a particular {\em computational subspace} \eqref{legal}
spanned by plane-wave eigenvectors of the Hamiltonian of the quantum walk on the line $\{\ket{\psi_t}, t=0,\dots,L\}$. 
Within this subspace, it rapidly `mixes', 
in the sense that at a random time between $0$ and $O(L \log L)$, the system is likely to have an active site in the clock register in the success region (see Figure \ref{figurecircle}). The sectors of the Hilbert space containing $\ket{\varphi_0}_w\otimes\ket{0}_c$ (computing on the proper initial state) and $\kets{\varphi_0^\perp}_w\otimes\ket{0}_c$ (computing on badly initialized work qubits) do not mix, because the Hamiltonian does not couple them. Running the HQC then can give us the resulting state $U\ket{\varphi_0}$ with high probability, even though we turned on $H_{3S}$ quickly.
The proof of AQC's universality thus works just as well without the adiabatic theorem (requiring some extra time evolution after reaching the final Hamiltonian). 
The essential element required for this result is that the system evolves only within a preferred computational subspace. This in turn is the result of our railroad-switch clock register construction -- which is frustration free, and doesn't induce any `illegal' transitions. Therefore, our result means that previous universality results for 3-local Hamiltonians which required large penalty terms can be greatly simplified,
and are valid even without assuming adiabaticity. We can now ask where the real power of Adiabatic Quantum Computing lies -- e.g. whether the model can stay universal if we only use restricted terms in the Hamiltonian as in \cite{Biamonte}.

%%%%%%%%%%%%%%%%%%%%%%%%%%%%%%%%%%%%%%%%%%%%%%%%%%%%%%%%%%%%%%%%%%%%%%%%%%%%%%%%%%%%%%%%%%
%%%%%%%%%%%%%%%%%%%%%%%%%%%%%%%%%%%%%%%%%%%%%%%%%%%%%%%%%%%%%%%%%%%%%%%%%%%%%%%%%%%%%%%%%%

\subsection{Protecting the Computation}
\label{gapsection}

The Hamiltonian quantum computation model we just presented is based on two premises. First, we initialize the system in a computational basis product state. Second, the Hamiltonian of the system is such that the Schr\"odinger time evolution of this state runs within a particular subspace. Transitions bringing the state of the system out of this computational subspace are not present in the Hamiltonian. This allowed us to understand the dynamics of this model and bound its required running time. We now ask what are the implications of noise on the dynamics of the system and how could we possibly protect the system from moving away from the computational subspace \eqref{legal}. 

Let us first look at the structure of the Hilbert space in the light of our Hamiltonian.
According to the number of active sites in the clock register, 
the Hilbert space is a direct sum $\cH_{0}\oplus \cH_{1} \oplus \cH_{2} \oplus \cH_{3} \oplus \cdots$. The orthogonal subspaces $\cH_{j}$ are not coupled by our Hamiltonian.
We divide them into three cathegories.

First, we would like to avoid the subspaces $\cH_{j>1}$ with more than one active clock site. We can deal with them by a 2-local, $O(L)$ norm, clock-checking Hamiltonian 
%$H_{clock}$. For a pulse clock with a single active site $\ket{\mathtt{1}}$, the Hamiltonian is
\begin{eqnarray}
	H_{clock} = \sum_{\langle j,k\rangle} \ket{\mathtt{11}}\bra{\mathtt{11}}_{c_j,c_k}.
	\label{clockpenalty}
\end{eqnarray}
Its terms add an energy penalty for two neighboring (also within individual railroad switches) active sites in the clock. Although states such as 
$\ket{\varphi}_w\otimes\ket{\mathtt{010\cdots010}}_c$ are not directly violating \eqref{clockpenalty}, eigenstates of $H$ containing such clock registers have nonzero energy. An eigenstate with two trains in the clock register misses required terms coming from the propagation Hamiltonian. If they were all present, terms like $\ket{\varphi'}_w\otimes\ket{\mathtt{0\cdots0110\cdots0}}_c$
with an illegal clock configuration detected by \eqref{clockpenalty} would appear.
This line of reasoning goes along the lines of the clairvoyance lemma of Aharonov et. al. in \cite{GottesmanLine:07}.
It can be used to show that after adding $H_{clock}$, the energy of the states in the sectors $\cH_{j>1}$ is at least $O(L^{-1})$.

Second, we have to deal with $\cH_{0}$ -- the computationally ``dead'' subspace with no active site in the clock register. This is a problem of the pulse clock (see Section \ref{clocksection}). Noise which could kill the active site in the clock could thus also kill the computation. One way to prevent this is to energetically favor a single active site by adding a carefully balanced Hamiltonian of the form
\begin{eqnarray}
	H_{active} = - a_1 \sum_{j} \ket{\mathtt{1}}\bra{\mathtt{1}}_{c_j} + a_2 
	\sum_{\langle j,k \rangle} \ket{\mathtt{11}}\bra{\mathtt{11}}_{c_j,c_k},
	\label{balance}
\end{eqnarray}
where the sum in the second term is taken over neighboring qubits in the clock register.
Together with the clairvoyance lemma in action, this Hamiltonian favors a single active site over none (or over too many of them).
However, adding $H_{active}$ means that our total Hamiltonian can no longer be rewritten as a sum of 3-local projector terms with a common zero-energy ground state (and thus be frustration-free) \cite{footnoteQMA}.

Finally, we have the subspace $\cH_{1}$ with a single train, which can be further decomposed into subspaces corresponding to different initial states $\ket{\varphi_0}_w$ of the $n$ work qubits:
\begin{eqnarray}
	\cH_{1} = \bigoplus_{\varphi_0=0}^{2^n-1} \cH_{1}^{(\varphi_0)}.
\end{eqnarray}
Out of these, $\cH_{1}^{(0)} = \cH_{legal}$ is our computational subspace, corresponding to all states that can be reached by Schr\"odinger time evolution from the initial state \eqref{psi0}.
The eigenvectors of $H$ in the subspace $\cH_1$ are the momentum states $\ket{p,\varphi_0}$, labeled by momentum $p$ and the initial state of the work qubits $\ket{\varphi_0}_w$. For the badly initialized states with $\varphi_0\neq 0$, Lloyd \cite{AQC:Lloyd} showed that $\bra{p,\varphi_0\neq 0}H_{datainit}\ket{p,\varphi_0\neq 0} \geq O(L^{-1})$. Therefore, the energy gap between the computational subspace $\cH_{1}^{(0)}$ and the subspaces $\cH_{1}^{(\varphi_0\neq 0)}$ is again on the order of $O(L^{-1})$.

%%%%%%%%%%%%%%%%%%%%%%%%%%%%%%%%%%%%%%%%%%%%%%%%%%%%%%%%%%%%%%%%%%%%%%%%
%%%%%%%%%%%%%%%%%%%%%%%%%%%%%%%%%%%%%%%%%%%%%%%%%%%%%%%%%%%%%%%%%%%%%%%%

\section{The Second Model: A Universal 2-local Qubit-Qutrit Hamiltonian}
\label{section23}

Could we modify the Quantum-3-SAT Hamiltonian of Section \ref{switchsection} to make it only 2-local? This is probably too much to ask when restricting ourselves to qubits. Nevertheless, by using qutrits in place of some of the qubits, we can obtain a 2-local construction. First, we modify the model from Section \ref{switchsection}, and show that
the result is a quantum walk on a necklace -- a cycle with regularly spaced extra nodes connected to it. Using the results of \cite{necklaces}, we bound the required running time of this model.

We modify the gadget from Figure \ref{figureswitch} by using three qutrits and two qubits on each track, as in Figure \ref{figure23}. A qubit can be active $\ket{\mathtt{1}}$ or inactive $\ket{\mathtt{0}}$. On the other hand, each qutrit will have two active states $\ket{\mathtt{L}}$, $\ket{\mathtt{R}}$ (left and right) and one inactive state $\ket{\mathtt{0}}$. 
We label the possible positions of the active site in the clock register (the train) on the upper track 
by $u_1^{\mathtt{L}}, u_1^{\mathtt{R}}, u_2, u_3^{\mathtt{L}}, u_3^{\mathtt{R}}, u_4, u_5^{\mathtt{L}}$, $u_5^{\mathtt{R}}$ and denote the corresponding states of the clock register $\kets{u_1^{\mathtt{L}}}_c$,
$\kets{u_1^{\mathtt{R}}}_c$, etc. Similarly, we label the possible train positions on the bottom track $d_1^{\mathtt{L}},\dots,d_5^{\mathtt{R}}$
and the corresponding clock register states $\kets{l_1^{\mathtt{L}}}_c$, $\kets{l_1^{\mathtt{R}}}_c$, etc.

There are two types of 2-local transition rules for the upper track. First, we have 6 straightforward rules corresponding to the thick black lines on the upper track in Figure \ref{figure23}:
\begin{eqnarray}
	\begin{array}{ll}
		\kets{u_1^{\mathtt{L}}}\bras{c_{t-1}} + \kets{c_{t-1}}\bras{u_1^{\mathtt{L}}}, &
	\kets{u_2}\bras{u_1^{\mathtt{R}}} + \kets{u_1^{\mathtt{R}}}\bras{u_2},  \\
	\kets{u_3^{\mathtt{L}}}\bras{u_2} + \kets{u_2}\bras{u_3^{\mathtt{L}}},  &
	\kets{u_4}\bras{u_3^{\mathtt{R}}} + \kets{u_3^{\mathtt{R}}}\bras{u_4},  \\
	\kets{u_5^{\mathtt{L}}}\bras{u_4} + \kets{u_4}\bras{u_5^{\mathtt{L}}},  &
	\kets{c_t}\bras{u_5^{\mathtt{R}}} + \kets{u_5^{\mathtt{R}}}\bras{c_t}.
	\end{array}
	\label{uppertrack23}
\end{eqnarray}
%where $\kets{u_j^{\mathtt{L}}}$ and $\kets{u_j^{\mathtt{R}}}$ denote the two active states of a qutrit (the third state corresponds to the situation when the active site is elsewhere in the clock register).
Second, we have the transitions denoted by red half-circles in Figure \ref{figure23}. They are also 2-local,
involving a work qubit and a clock qutrit: 
\begin{eqnarray}
	\kets{1}\bras{1}_{q_1} \otimes \big( \kets{u_1^{\mathtt{R}}}\bras{u_1^{\mathtt{L}}} + \kets{u_1^{\mathtt{L}}}\bras{u_1^{\mathtt{R}}} \big), &&\label{t1x23}\\ 
	\sigma_{q_2}^{x} \otimes \big( \kets{u_3^{\mathtt{R}}}\bras{u_3^{\mathtt{L}}} + \kets{u_3^{\mathtt{L}}}\bras{u_3^{\mathtt{R}}} \big), &&\label{t3x23}\\ 
	\kets{1}\bras{1}_{q_1} \otimes \big( \kets{u_5^{\mathtt{R}}}\bras{u_5^{\mathtt{L}}} + \kets{u_5^{\mathtt{L}}}\bras{u_5^{\mathtt{R}}} \big).&&
	\label{t5x23}
\end{eqnarray}
The rules \eqref{t1x23} and \eqref{t5x23} are conditioned on the state of the control qubit $q_1$, and \eqref{t3x23} flips the target work qubit $q_2$ when transitioning between $\kets{u_3^{\mathtt{L}}}$ and $\kets{u_3^{\mathtt{R}}}$
in the clock register.

Similar transition rules apply to the bottom track. The straightforward terms are
\begin{eqnarray}
	\begin{array}{ll}
		\kets{l_1^{\mathtt{L}}}\bras{c_{t-1}} + \kets{c_{t-1}}\bras{l_1^{\mathtt{L}}}, &
	\kets{l_2}\bras{l_1^{\mathtt{R}}} + \kets{l_1^{\mathtt{R}}}\bras{l_2},  \\
	\kets{l_3^{\mathtt{L}}}\bras{l_2} + \kets{l_2}\bras{l_3^{\mathtt{L}}},  &
	\kets{l_4}\bras{l_3^{\mathtt{R}}} + \kets{l_3^{\mathtt{R}}}\bras{l_4},  \\
	\kets{l_5^{\mathtt{L}}}\bras{l_4} + \kets{l_4}\bras{l_5^{\mathtt{L}}},  &
	\kets{c_t}\bras{l_5^{\mathtt{R}}} + \kets{l_5^{\mathtt{R}}}\bras{c_t}.
	\end{array}
\end{eqnarray}
and the analogues of
\eqref{t1x23} and \eqref{t5x23} for the lower track are conditioned on the control qubit $q_1$ being $\ket{0}$, i.e.
\begin{eqnarray}
	\kets{0}\bras{0}_{q_1} \otimes \big( \kets{l_1^{\mathtt{R}}}\bras{l_1^{\mathtt{L}}} + \kets{l_1^{\mathtt{L}}}\bras{l_1^{\mathtt{R}}} \big), \\	 
	\kets{0}\bras{0}_{q_1} \otimes \big( \kets{l_5^{\mathtt{R}}}\bras{l_5^{\mathtt{L}}} + \kets{l_5^{\mathtt{L}}}\bras{l_5^{\mathtt{R}}} \big).\,
\end{eqnarray}
Finally, the analogue of \eqref{t3x23} is simply 
\begin{eqnarray}
	\kets{u_3^{\mathtt{R}}}\bras{u_3^{\mathtt{L}}} + \kets{u_3^{\mathtt{L}}}\bras{u_3^{\mathtt{R}}},
	\label{untouched23}
\end{eqnarray}
leaving the work qubit $q_2$ untouched.
Summing all the terms in \eqref{uppertrack23}-\eqref{untouched23} gives us the 2-local qubit-qutrit railroad switch Hamiltonian $H_{23}^{(t)}$.

\begin{figure}
	\begin{center}
	\includegraphics[width=3in]{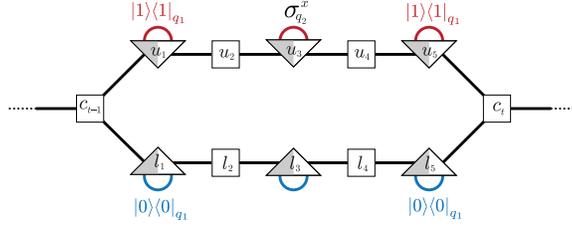} 
	\end{center}
	\caption{The 2-local {\em railroad switch} gadget made from qubits (squares) and qutrits (triangles). 
	Simple transitions of the train are denoted by black lines. Transitions between two active states of a qutrit are denoted by half-circles.
	\label{figure23}}
\end{figure}

\begin{figure}
	\begin{center}
	\includegraphics[width=2in]{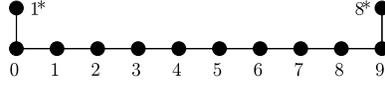} 
	\end{center}
	\caption{The geometry of connections in $H_{23}^{(t)}$ restricted to the subspace $\cH_{g23}^{(t)}$ (within one gadget).
	\label{figurecomb}}
\end{figure}

\begin{figure}
	\begin{center}
	\includegraphics[width=1.6in]{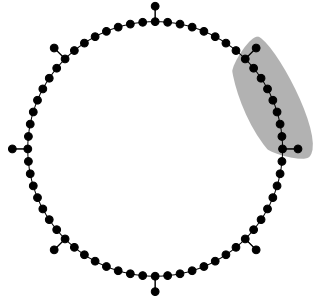} 
	\end{center}
	\caption{The necklace graph depicting the connections in $H_{23}$ in a particular basis. The shaded region corresponds to states within one railroad-switch gadget.
	\label{figurenecklace}}
\end{figure}

Similarly to what we did in Section \ref{dynamicssection},
let us examine where the transitions in this Hamiltonian take the state
\begin{eqnarray}
	\ket{\psi_{t-1}} 
		&=& \big(
		a \underbrace{\ket{0}_{q_1} \ket{\alpha}_{q_2,\dots}}_{\ket{\varphi_{0\dots}}} 
		+ b \underbrace{\ket{1}_{q_1} \ket{\beta}_{q_2,\dots}}_{\ket{\varphi_{1\dots}}}
		\big)
		\otimes \ket{c_{t-1}}_c 
\end{eqnarray}
where $a \ket{\varphi_{0\dots}} + b \ket{\varphi_{1\dots}}$ is a general state of the work register
and the active site of clock register is at the qubit $c_{t-1}$.
Within the gadget, the transitions in $H_{23}$ move this state around a 12-dimensional subspace 
$\cH_{g32}^{(t)}$ with the geometry of allowed transitions depicted in Figure \ref{figurecomb}. The basis of this subspace contains first the endpoint states
\begin{eqnarray}
	\kets{\psi_{t-1}^{(0)}} &=& 
	\big( a \ket{\varphi_{0\dots}} + b \ket{\varphi_{1\dots}}
		\big)
		\otimes \ket{c_{t-1}}_c  \\
	\kets{\psi_{t}^{(9)}} 
	&=& \left(	a  \ket{\varphi_{0\dots}} +	b \ket{\varphi'_{1\dots}} \right) 
			\otimes \ket{c_{t}}_c,
\end{eqnarray}
where $\kets{\varphi'_{1\dots}} = \sigma^x_{q_2} \kets{\varphi_{1\dots}}$ has the target qubit $q_2$ flipped. Second, we have the sequence  
\begin{eqnarray}
	\kets{\psi_{t-1}^{(1)}} 
	&=& a \ket{\varphi_{0\dots}} \otimes \ket{l_1^{\mathtt{L}}}_c
		+	b \ket{\varphi_{1\dots}} \otimes \ket{u_1^{\mathtt{L}}}_c, \\
	\kets{\psi_{t-1}^{(2)}} 
	&=& a \ket{\varphi_{0\dots}} \otimes \ket{l_1^{\mathtt{R}}}_c
		+	b \ket{\varphi_{1\dots}} \otimes \ket{u_1^{\mathtt{R}}}_c, \\
	\kets{\psi_{t-1}^{(3)}} 
	&=& a \ket{\varphi_{0\dots}} \otimes \ket{l_2^{\phantom{\mathtt{R}}}}_c
		+	b \ket{\varphi_{1\dots}} \otimes \ket{u_2^{\phantom{\mathtt{R}}}}_c, \\
	\kets{\psi_{t-1}^{(4)} }
	&=& a \ket{\varphi_{0\dots}} \otimes \ket{l_3^{\mathtt{L}}}_c
		+	b \ket{\varphi_{1\dots}} \otimes \ket{u_3^{\mathtt{L}}}_c, \\
	\kets{\psi_{t-1}^{(5)}} 
	&=& a \ket{\varphi_{0\dots}} \otimes \ket{l_3^{\mathtt{R}}}_c
		+	b \ket{\varphi'_{1\dots}} \otimes \ket{u_3^{\mathtt{R}}}_c, \\
	\kets{\psi_{t-1}^{(6)}} 
	&=& a \ket{\varphi_{0\dots}} \otimes \ket{l_4^{\phantom{\mathtt{R}}}}_c
		+	b \ket{\varphi'_{1\dots}} \otimes \ket{u_4^{\phantom{\mathtt{R}}}}_c, \\
	\kets{\psi_{t-1}^{(7)}} 
	&=& a \ket{\varphi_{0\dots}} \otimes \ket{l_5^{\mathtt{L}}}_c
		+	b \ket{\varphi'_{1\dots}} \otimes \ket{u_5^{\mathtt{L}}}_c, \\
	\kets{\psi_{t-1}^{(8)}}
	&=& a \ket{\varphi_{0\dots}} \otimes \ket{l_5^{\mathtt{R}}}_c
		+	b \ket{\varphi'_{1\dots}} \otimes \ket{u_5^{\mathtt{R}}}_c.
\end{eqnarray}
Finally, the last two states in the basis are the two `blind-alley' states, in which the active site tries to use the track which it is not supposed to take
\begin{eqnarray}
	\kets{\psi_{t-1}^{(1*)}}_g 
	&=& a \ket{\varphi_{0\dots}} \otimes \ket{u_1^{\mathtt{L}}}_c
		+	b \ket{\varphi_{1\dots}} \otimes \ket{l_1^{\mathtt{L}}}_c, \\
	\kets{\psi_{t-1}^{(8*)}}_g 
	&=& a \ket{\varphi_{0\dots}} \otimes \ket{u_8^{\mathtt{R}}}_c
		+	b \ket{\varphi'_{1\dots}} \otimes \ket{l_8^{\mathtt{R}}}_c.
\end{eqnarray}
Omitting the connection to the region outside the gadget, this 12-dimensional subspace $\cH_{g23}^{(t)}$ is invariant under the transitions in $H_{23}^{(t)}$. 
It is convenient to analyze the dynamics of $H_{23}^{(t)}$ using the basis just given.
In this basis, the Hamiltonian within one gadget has a very simple form -- the adjacency matrix of a line $\kets{\psi_{t-1}^{(0)}},\kets{\psi_{t-1}^{(1)}},\dots,\kets{\psi_{t-1}^{(8)}},\kets{\psi_t^{(9)}}$ with two additional nodes $\kets{\psi_{t-1}^{(1*)}}$ and $\kets{\psi_{t-1}^{(8*)}}$ as in Figure \ref{figurecomb}. Moreover, 
we can give our clock register a very regular form by arranging 8 clock qubits between two successive gadgets. 
We can then choose to look at the dynamics of our system in a basis consisting of states of the form \eqref{psit} (for the non-gadget parts) and of the bases just given for individual gadgets.
Similarly to what we found in Section \ref{dynamicssection}, in this basis 
the Hamiltonian of our model has a very simple form. It is an adjacency matrix of a ring with extra lines sticking out of it on every 9-th point as in Figure \ref{figurenecklace}. 
This allows us to forget about the content of the work register when solving the dynamics of the time evolution with our Hamiltonian.

Finally, we need to find the required running time for the qubit-qutrit model. The dynamics of the time evolution generated by the Hamiltonian is a quantum walk on a necklace from Figure \ref{figurenecklace}. In an upcoming paper by Nagaj and Reitzner \cite{necklaces}, we investigate the mixing properties of this type of quantum walks. 
Again, the mixing is fast. When we start in the initial state \eqref{circleclockinit}, and let the system evolve with $H_{23}$ for a random time $0\leq \tau \leq T$ with $T = O(L^2)$ (up to logarithmic factors) and measure the output work qubit, our model is universal for $\BQP$.

%%%%%%%%%%%%%%%%%%%%%%%%%%%%%%%%%%%%%%%%%%%%%%%%%%%%%%%%%
%%%%%%%%%%%%%%%%%%%%%%%%%%%%%%%%%%%%%%%%%%%%%%%%%%%%%%%%%

\section*{Acknowledgments}
The author would like to thank Andrew Landahl and Avi Hassidim for the discussions at the QIP 2008 poster session, Seth Lloyd for helpful discussions and access to his unpublished manuscript, Stephen Jordan for useful comments on improving the manuscript
and finally, Diego DeFalco for pointing out the notion of the {\em switch} in Feynman's paper and further references. Part of this work was done while the author was finishing his Ph.D. thesis at MIT. The author gratefully acknowledges support from the Slovak Research and Development Agency under the contract No. APVV-0673-07 and contract No. LPP-0430-09 and from the W. M. Keck Foundation Center for Extreme Quantum Information Theory at MIT.

%%%%%%%%%%%%%%%%%%%%%%%%%%%%%%%%%%%%%%%%%%%%%%%%%%%%%%%%%
%%%%%%%%%%%%%%%%%%%%%%%%%%%%%%%%%%%%%%%%%%%%%%%%%%%%%%%%%

\appendix

\section{Continuous-time Quantum Walks in 1D}
\label{walksection}

First, in Section \ref{plainline} we analyze the continuous-time quantum walk on a line and prove a lemmas about the mixing of this walk. Second, 
%in Section \ref{endsline} look at the walk with additional boundary terms and 
in Section \ref{circleline} we analyze the quantum walk on a cycle and prove another mixing lemma used in the proof of universality of the {\em railroad switch} Hamiltonian Computer in Section \ref{switchsection}.

%%%%%%%%%%%%%%%%%%%%%%%%%%%%%%%%%
%%%%%%%%%%%%%%%%%%%%%%%%%%%%%%%%%
\subsection{Quantum Walk on a Line}
\label{plainline}
Consider a continuous time quantum walk on a line of length $L$, where the Hamiltonian is the negative of the adjacency matrix for the line 
\begin{eqnarray}
	H_1 = - \sum_{j=1}^{L-1} 
	\left(\ket{j}\bra{j+1} + \ket{j+1}\bra{j}\right).
	\label{H1hamiltonian}
\end{eqnarray}
The eigenvalues of this Hamiltonian are
\begin{eqnarray}
	\lambda_j = - 2 \cos \left(\frac{j\pi}{L+1}\right),
	\label{eigenval}
\end{eqnarray}
for $j=1\dots L$, while the corresponding eigenvectors $\kets{\phi^{(j)}} = \sum_{k=1}^{L} \phi^{(j)}_k \ket{k}$ have components
\begin{eqnarray}
	\phi^{(j)}_k = \sqrt{\frac{2}{L+1}} \, 
		\sin \left( \frac{j k \pi}{L+1} \right).
	\label{eigenvec}
\end{eqnarray}
Consider the time evolution of a particular basis state $\ket{c}$. The probability of finding the system in the 
%%% XXX
basis 
%%% XXX
state $\ket{m}$ at some time $\tau$ can be found by expanding $\ket{c}$ and $\ket{m}$ in the basis of the eigenvectors \eqref{eigenvec}:
\begin{eqnarray}
	p_{\tau}(m|c) = \left| 
				\bra{m} e^{-iH\tau} \ket{c}
			\right|^2
		= \sum_{j,k=1}^{L} e^{-i (\lambda_j -\lambda_k) \tau} 
			\phi^{(j)}_m
			\phi^{(j)*}_c
			\phi^{(k)*}_m
			\phi^{(k)}_c.
\end{eqnarray}
Because the time evolution (according to the Schr\"odinger equation) is unitary, this probability $p_{\tau}(m|c)$ does not converge. On the other hand, let us define the time average of $p_{\tau}(m|c)$
for time $0\leq\tau\leq\tau_{20}$ as
\begin{eqnarray}
	\bar{p}_{\tau_{20}} (m|c) 
		= \frac{1}{\tau_{20}} \int_0^{\tau_{20}} p_{\tau} (m|c) d\tau.
	\label{Paverage}
\end{eqnarray}
As we will show below in Lemma \ref{convergelemma}, this average probability distribution does converge to a limiting distribution $\pi(m|c)$, defined
as the $\tau_{20}\rightarrow\infty$ limit of the average probability distribution \eqref{Paverage}. All the eigenvalues \eqref{eigenval} are different, so we can express the limiting distribution as
\begin{eqnarray}
	\pi(m|c) 
	= \lim_{\tau_{20}\rightarrow\infty} \bar{p}_{\tau_{20}}(m|c) 
	= \sum_{j=1}^{L} \big|\phi^{(j)}_m\big|^2 
		\big|\phi^{(j)}_c\big|^2,
	\label{deflimdistrib}
\end{eqnarray}
which in this case is
\begin{eqnarray}
	\pi(m|c) 
	 = \frac{2+\delta_{m,c} + \delta_{m,L+1-c}}{2(L+1)}.
	 \label{Plimiting}
\end{eqnarray}

According to the following lemma, the average probability distribution \eqref{Paverage} converges to the limiting distribution $\pi(m|c)$.
\begin{lemma}
\label{convergelemma}
Consider a continuous time quantum walk on a line of length $L$, where the Hamiltonian is the negative of the adjacency matrix for the line. Let the system evolve for time $\tau \leq \tau_{20}$ chosen uniformly at random, starting in a position basis state $\ket{c}$.
The average probability distribution $\bar{p}_{\tau_{20}}(\cdot|c)$ converges to the limiting probability distribution $\pi(\cdot|c)$ as
\begin{eqnarray}
	\sum_{m=1}^{L} \left| \bar{p}_{\tau_{20}}(m|c) - \pi(m|c) \right|
	\leq 
		O\left(\frac{L}{\tau_{20}}\right).
	\label{limitinglemma}
\end{eqnarray}
\end{lemma}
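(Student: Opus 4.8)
The plan is to work entirely in the eigenbasis \eqref{eigenvec}. First I would expand $p_\tau(m|c)$ as the double sum over $j,k$ already displayed and split off the diagonal $j=k$ terms. Since every eigenvalue \eqref{eigenval} is distinct, the time average leaves each diagonal phase equal to $1$, and those terms reproduce exactly the limiting distribution \eqref{deflimdistrib}. Hence the deviation is the off-diagonal part,
\[
	\bar p_{\tau_{20}}(m|c) - \pi(m|c) = \sum_{j\ne k} \left(\frac{1}{\tau_{20}}\int_0^{\tau_{20}} e^{-i(\lambda_j-\lambda_k)\tau}\,d\tau\right)\phi^{(j)}_m\phi^{(j)}_c\phi^{(k)}_m\phi^{(k)}_c,
\]
where I used that the eigenvectors are real. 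The inner integral equals $\frac{e^{-i(\lambda_j-\lambda_k)\tau_{20}}-1}{-i\,\tau_{20}(\lambda_j-\lambda_k)}$, whose modulus is at most $\frac{2}{\tau_{20}|\lambda_j-\lambda_k|}$ (and also trivially at most $1$).

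Next I would push the absolute value inside, sum over the target site $m$, and invoke two elementary estimates. Cauchy--Schwarz gives $\sum_m |\phi^{(j)}_m\phi^{(k)}_m| \le \norm{\phi^{(j)}}\,\norm{\phi^{(k)}} = 1$, and the explicit form \eqref{eigenvec} gives the uniform bound $|\phi^{(j)}_c| \le \sqrt{2/(L+1)}$ for every $j$ and $c$. Combining these,
\[
	\sum_{m=1}^{L}\left|\bar p_{\tau_{20}}(m|c)-\pi(m|c)\right| \le \frac{2}{\tau_{20}}\cdot\frac{2}{L+1}\sum_{j\ne k}\frac{1}{|\lambda_j-\lambda_k|} = \frac{4}{(L+1)\,\tau_{20}}\,\Sigma,
\]
so everything reduces to controlling the eigenvalue-gap sum $\Sigma := \sum_{j\ne k}|\lambda_j-\lambda_k|^{-1}$; a bound $\Sigma = O(L^2)$ would immediately yield \eqref{limitinglemma}.

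For $\Sigma$ I would substitute $\lambda_j = -2\cos(j\theta)$ with $\theta=\pi/(L+1)$ and use the product-to-sum identity $\lambda_j-\lambda_k = 4\sin\frac{(j+k)\theta}{2}\sin\frac{(j-k)\theta}{2}$. The factor $\sin\frac{(j-k)\theta}{2}$ is bounded below by $\frac{2}{\pi}\cdot\frac{|j-k|\theta}{2}$ on the relevant range, turning the near-diagonal part into a harmonic series, while $\sin\frac{(j+k)\theta}{2}$ is bounded away from zero except when $j+k$ approaches $0$ or $2(L+1)$, i.e.\ at the two band edges. Fixing $j$ and comparing $\sum_{k\ne j}|\lambda_j-\lambda_k|^{-1}$ to the integral of the density of states $\rho(\lambda)\propto (L+1)/\sqrt{4-\lambda^2}$ gives $O(L)$ per row up to a logarithmic correction, hence $\Sigma = O(L^2)$ up to logs.

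The main obstacle is precisely this gap sum at the two band edges, where consecutive spacings collapse like $O(1/L^2)$ (the van Hove singularities of the tight-binding chain) while the near-diagonal pairs simultaneously contribute a logarithmically divergent harmonic tail. A naive row-by-row estimate therefore yields $\Sigma = O(L^2\log L)$ rather than $O(L^2)$, and removing this stray logarithm is the delicate step. I expect the clean route is to retain the sharper weight $\min\!\big(1,\tfrac{2}{\tau_{20}|\lambda_j-\lambda_k|}\big)$ for the smallest gaps, so that the $O(1/L^2)$ edge gaps each contribute $O(1)$ instead of blowing up, and to bound the bulk by the density-of-states integral; this is how I would match the stated $O(L/\tau_{20})$.
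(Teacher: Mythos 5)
Your derivation is sound up to the inequality
\[
	\sum_{m}\left|\bar p_{\tau_{20}}(m|c)-\pi(m|c)\right| \;\le\; \frac{2}{\tau_{20}}\sum_{j\ne k}\frac{\big|\phi^{(j)}_c\phi^{(k)}_c\big|}{|\lambda_j-\lambda_k|},
\]
which is essentially a re-derivation of Lemma 4.3 of Aharonov--Ambainis--Kempe--Vazirani, the result the paper simply cites at the start of its proof. The gap opens at the very next step, where you replace $|\phi^{(j)}_c\phi^{(k)}_c|$ by its uniform maximum $2/(L+1)$ and reduce the lemma to the claim $\Sigma=\sum_{j\ne k}|\lambda_j-\lambda_k|^{-1}=O(L^2)$. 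That claim is false, not merely delicate: taking only the consecutive pairs $(j,j+1)$ with $j\le L/2$, one has $|\lambda_j-\lambda_{j+1}|=4\sin\big(\tfrac{(2j+1)\pi}{2(L+1)}\big)\sin\big(\tfrac{\pi}{2(L+1)}\big)\le (2j+1)\pi^2/(L+1)^2$, so $\sum_{j\le L/2}|\lambda_j-\lambda_{j+1}|^{-1}\ge \tfrac{(L+1)^2}{\pi^2}\sum_{j\le L/2}\tfrac{1}{2j+1}=\Omega(L^2\log L)$. Your proposed repair via the weight $\min\big(1,\tfrac{2}{\tau_{20}|\lambda_j-\lambda_k|}\big)$ does not close this either: the bound must hold for all $\tau_{20}$, and once $\tau_{20}\gtrsim L^2$ the minimum is attained by the second argument on all of these pairs, leaving you with $\Omega(L\log L/\tau_{20})$, which is not $O(L/\tau_{20})$.

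The idea you are missing is that the spectral weight of the start site must be kept inside the sum over $(j,k)$; discarding it is not a harmless simplification but precisely the step at which the stray logarithms become unavoidable. The paper bounds $\sum_{\lambda_j\ne\lambda_k}|\phi^{(j)}_c|^2/|\lambda_j-\lambda_k|$ termwise: the numerator $|\phi^{(j)}_c|^2=\tfrac{2}{L+1}\sin^2\big(\tfrac{jc\pi}{L+1}\big)$ vanishes at the band edges $j\to 0$ and $j\to L+1$, which is exactly where the gaps collapse to $O(1/L^2)$, and this cancellation tames the van Hove singularity you correctly identified. With the weight retained, the near-diagonal ratios ($|j-k|\le C_1$) are $O(1)$ each rather than $O(L)$, the remaining pairs contribute $O(1/L)$ each, and the weighted sum is $O(L)$, giving $O(L/\tau_{20})$. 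A useful sanity check that your unweighted route cannot work: on the cycle the eigenvector weights \emph{are} uniform, $|\phi^{(j)}_0|^2=2/L$, and there the paper's own estimate of the unweighted gap sum is $S_\circ=O(L^2\log^2 L)$, which is why Lemma 2 carries a $\log^2 L$ that Lemma 1 does not. At best your argument proves the weaker bound $O(L\log^2 L/\tau_{20})$, not the stated one. (One further caution if you do keep the weight: the termwise bounds rely on $|\phi^{(j)}_c|^2$ actually being small at the band edges, which holds for start sites $c=O(1)$ such as the endpoint used in the application, so any clean write-up should track the $c$-dependence explicitly.)
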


\begin{proof}
First, recall Lemma 4.3 of \cite{CA:AAKVwalk:01} for the total variation distance of the probability distribution $\bar{p}_{\tau_{20}}$ from the limiting distribution, saying
\begin{eqnarray}
	\sum_m \left| \bar{p}_{\tau_{20}}(m|c) - \pi(m|c) \right| \leq 
		\frac{2}{\tau_{20}} \sum_{\lambda_j\neq \lambda_k} 
		\frac{\big|\phi^{(j)}_c\big|^2}{ |\lambda_j-\lambda_k|}.
	\label{convergebound}
\end{eqnarray}
Using \eqref{eigenval} and \eqref{eigenvec}, we can bound the expression on the right of \eqref{convergebound}. When $j$ is close to $k$, i.e. $|j-k|\leq C_1$, we can write
\begin{eqnarray}
	\frac{\big|\phi^{(j)}_c\big|^2}{|\lambda_j-\lambda_k|} < 2.
\end{eqnarray}
On the other hand, for $|j-k|>C_1$ we can bound
\begin{eqnarray}
	\frac{\big|\phi^{(j)}_c\big|^2}{|\lambda_j-\lambda_k|} < \frac{C_2}{L+1},
\end{eqnarray}
with $C_1$ and $C_2$ constants independent of $L$. 
Inserting into \eqref{convergebound}, we have
\begin{eqnarray}
	\sum_{m=1}^{L} \left| \bar{p}_{\tau_{20}}(m|c) - \pi(m|c) \right|
	\leq 
		\frac{8 C_1 L}{\tau_{20}} + \frac{C_2 L}{\tau_{20}}
		= O\left(\frac{L}{\tau_{20}}\right).
\end{eqnarray}
\end{proof}

%%%%%%%%%%%%%%%%%%%%%%%%%%%%%%%%%%%%%%%%%%%%%%%%%%%%%%%%%%
%%%%%%%%%%%%%%%%%%%%%%%%%%%%%%%%%%%%%%%%%%%%%%%%%%%%%%%%%%

\subsection{Quantum Walk on a Circle}
\label{circleline}

If the geometry of the system is a closed loop of length $L$ instead of a line, the Hamiltonian \eqref{H1hamiltonian} gets an additional wrap-around term.
\begin{eqnarray}
	H_{loop} = 
	-\left(\ket{L}\bra{1} + \ket{1}\bra{L}\right)
	- \sum_{j=1}^{L-1} 
	\left(\ket{j}\bra{j+1} + \ket{j+1}\bra{j}\right).
	\label{HLoophamiltonian}
\end{eqnarray}
The eigenvalues of this Hamiltonian are
\begin{eqnarray}
	\lambda_j = - 2 \cos p_j,
	\label{cycleeigenval}
\end{eqnarray}
corresponding to plain waves with momenta
\begin{eqnarray}
	p_j = \frac{2\pi j}{L},
	\label{cyclemomenta}
\end{eqnarray}
for $j=0\dots L-1$. The corresponding eigenvectors $\kets{\phi^{(j)}} = \sum_{k=1}^{L} \phi^{(j)}_k \ket{k}$ have components
\begin{eqnarray}
	\phi^{(0)}_k &=& \frac{1}{\sqrt{L}}, \\
	\phi^{(j)}_k &=& \sqrt{\frac{2}{L}} e^{ip_j k}, \qquad j=1,\dots,L-1,
	\label{cycleeigenvec}
\end{eqnarray}
These can be combined to make real eigenvectors. For our analysis, it will be enough to consider a line with even length, and only the cosine plain waves:
\begin{eqnarray}
	\phi^{(0)}_k &=& \frac{1}{\sqrt{L}}, \\
	\phi^{(j)}_k &=& \sqrt{\frac{2}{L}} \cos(p_j k), \qquad j=1,\dots,\frac{L}{2}.
	\label{cycleeveneigenvec}
\end{eqnarray}

The limiting distribution on the cycle when starting from site $c$ is 
\begin{eqnarray}
	\pi(m|c) 
	 = \frac{1}{L}- \frac{2}{L^2}.
	 \label{Pcyclelimiting2}
\end{eqnarray}
for all points $m$ except for $m=c$ (return back) and $m=L+1-c$
(the point across the cycle), where we have
\begin{eqnarray}
	\pi(m|c) 
	 = \frac{2}{L} - \frac{2}{L^2}.
	 \label{Pcyclelimiting1}
\end{eqnarray}
As in Section \ref{plainline}, we will again utilize Lemma 4.3 of \cite{CA:AAKVwalk:01} to prove the convergence of the time-averaged probability distribution to this limiting distribution. On the right side of \eqref{convergebound}, we now have
\begin{eqnarray}
	\big|\phi_0^{(j)}\big|^2 \leq \frac{2}{L}.
\end{eqnarray}
The sum over the non-equal eigenvalues
\begin{eqnarray}
	S_{\circ} = \sum_{\lambda_j\neq \lambda_k} 
		\frac{1}{ |\lambda_j-\lambda_k|}
		\label{cycleeigsum}
\end{eqnarray}
is now more complicated, because of the degeneracy of the spectrum.
\begin{figure}
	\begin{center}
	\includegraphics[width=2in]{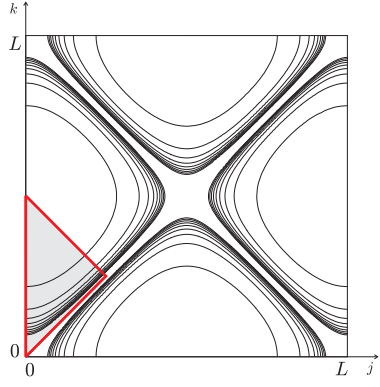}
	\end{center}
	\caption{A contour plot of $\left|\lambda_j - \lambda_k\right|^{-1}$ with $\lambda_j$ from \eqref{cycleeigenval}. The sum in \eqref{cycle16} is over the marked region.}
	\label{appc:figcircleeig}
\end{figure}
We plot $|\lambda_j-\lambda_k|^{-1}$ in Figure \ref{appc:figcircleeig}.
Because of the symmetrical way $\lambda_j$ \eqref{cycleeigenval} arise, assuming $L$ is divisible by 4, we can express \eqref{cycleeigsum} as
\begin{eqnarray}
	S_{\circ} 
	&=& 
	\sum_{\lambda_j\neq\lambda_k} 
		\frac{1}{ |\lambda_j-\lambda_k|} \\
% 	&=& \sum_{j=0}^{L-1} \sum_{k\neq j, L-j} 		\frac{1}{ |\lambda_j-\lambda_k|}	\\
%	&=& 	8 \sum_{j=0}^{L/2} \sum_{k=j+1}^{L/2} 		\frac{1}{ |\lambda_j-\lambda_k|} 	\\
	&\leq& 
	16 \sum_{j=0}^{L/4} \sum_{k=j+1}^{L/2-j}
		\frac{1}{ |\lambda_j-\lambda_k|}
		\label{cycle16}
	\\
	&=& 
	16 \underbrace{\sum_{k=1}^{L/2}
		\frac{1}{ |\lambda_0-\lambda_k|}}_{A_0}
		+
	16 \underbrace{\sum_{j=1}^{L/4} \sum_{k=j+1}^{L/2-j}
		\frac{1}{ |\lambda_j-\lambda_k|}}_{A_1}.
		\label{cyclesumbounds}
\end{eqnarray}
The term $A_0$ comes from $j=0$. We can bound the sum by an integral, taking $x=\frac{\delta k}{L}$, obtaining
\begin{eqnarray}
	A_0 = \frac{1}{2}\sum_{k=1}^{L/2}
		\frac{1}{ 1-\cos\left(2\pi \frac{k}{L}\right)} 
	\leq \frac{L}{2} \int_{\frac{1}{L}}^{\frac{1}{2}}
		\frac{dx}{ 1-\cos(2\pi x)}.
\end{eqnarray}
For $x \in \left[0,\half\right]$, we can bound
\begin{eqnarray}
	1-\cos{2 \pi x} \geq 2 \pi x^2,
\end{eqnarray}
resulting in
\begin{eqnarray}
	A_0 \leq \frac{L}{4\pi} \int_{\frac{1}{L}}^{\frac{1}{2}}
		\frac{dx}{x^2}
		= \frac{1}{4\pi} \left(L^2 - 2L\right) = O(L^2).
		\label{a0bound}
\end{eqnarray}
We bound the other term, $A_1$, in \eqref{cyclesumbounds} by an integral as well:
\begin{eqnarray} 
	A_1 &=& \frac{1}{2}\sum_{j=1}^{L/4} \sum_{k=j+1}^{L/2-j}
		\frac{1}{ \cos\left(2\pi\frac{k}{L}\right)
			- \cos\left(2\pi\frac{j}{L}\right)} \\
		&\leq&
		\frac{L^2}{2} \int_{\frac{1}{L}}^{\frac{1}{4}-\frac{1}{L}}
		dy 
		\int_{\frac{1}{L}+y}^{\frac{1}{2}-y}
		\frac{dx}{\cos\left(2\pi x\right)
			- \cos\left(2\pi y\right)}.		
\end{eqnarray} 
Again, we can lower bound the difference in eigenvalues for $y\in \left[0,\frac{1}{4}\right]$ and $x\in \left[y,\frac{1}{2}-y\right]$ by
\begin{eqnarray}
	\cos\left(2\pi x\right)
			- \cos\left(2\pi y\right) \geq 2\pi (x^2-y^2),
\end{eqnarray}
allowing us to write
\begin{eqnarray} 
	A_1 &\leq& 
		\frac{L^2}{4\pi} \int_{\frac{1}{L}}^{\frac{1}{4}-\frac{1}{L}}
		dy 
		\int_{\frac{1}{L}+y}^{\frac{1}{2}-y}
		\frac{dx}{x^2-y^2} \\
	&=&
		\frac{L^2}{4\pi} \int_{\frac{1}{L}}^{\frac{1}{4}-\frac{1}{L}}
		dy 
		\left[
			\frac{\log\left(\frac{x-y}{x+y}\right)}{2y}
		\right]_{\frac{1}{L}+y}^{\frac{1}{2}-y} \\
	&=&
		\frac{L^2}{4\pi} \int_{\frac{1}{L}}^{\frac{1}{4}-\frac{1}{L}}
		\frac{dy}{2y} 
		\underbrace{\left[
			\log\left(1-4y\right)
			+ 
			\log\left(1+2yL\right)
		\right]}_{R}.
		\label{a1final}
\end{eqnarray} 
As $y$ in \eqref{a1final} is at most $\frac{1}{4}-\frac{1}{L}$,
we can bound $R$ by
\begin{eqnarray}
	|R| \leq \log L.
\end{eqnarray}
Finally, this results in
\begin{eqnarray}
	A_1 \leq \frac{L^2 \log L}{2\pi} \int_{\frac{1}{L}}^{\frac{1}{4}-\frac{1}{L}}
		\frac{dy}{2y} 
		= \frac{L^2 \log L}{2\pi} 
		\underbrace{\log \left(\frac{\frac{1}{4}-\frac{1}{L}}{\frac{1}{L}}\right)}_{\leq \log L}
		\leq O(L^2 \log^2 L).
		\label{a1bound}
\end{eqnarray}
Putting \eqref{a0bound} and \eqref{a1bound} into \eqref{cyclesumbounds}, we obtain
\begin{eqnarray}
	S_{\circ} \leq O(L^2 \log^2 L).
\end{eqnarray}
Lemma 4.3 of \cite{CA:AAKVwalk:01} (see \eqref{convergebound}) then reads
\begin{eqnarray}
	\sum_m \left| \bar{p}_{\tau_{20}}(m|c) - \pi(m|c) \right| \leq 
		\frac{2}{\tau_{\circ}} \frac{1}{L} O(L^2 \log^2 L)
		= O\left(\frac{L \log^2 L}{\tau_\circ}\right).
	\label{convergeboundcycle}
\end{eqnarray}
Thus, for $\tau_{circ} = \ep O(L \log^2 L)$, the time-averaged distribution becomes $\ep$-close to the limiting distribution. Using the bound on the total variation distance we just proved, it is straightforward to obtain the following lemma which we use in Section \ref{switchsection}:
\begin{lemma}
	\label{cyclelemma}
Consider a continuous time quantum walk on a cycle of length $L$ (divisible by 4), where the Hamiltonian is the negative of the adjacency matrix for the cycle. Let the system evolve for a time $\tau \leq \tau_{\circ}$ chosen uniformly at random, starting in a position basis state $\ket{0}$.
The probability to measure a position state $\ket{t}$ farther than $L/6$ from the starting point (the farther two thirds of the cycle) is then bounded from below as $p_{\circ}\geq\frac{2}{3}-\frac{1}{3L}-O\left(\frac{L \log^2 L}{\tau_{\circ}}\right)$. 
\end{lemma}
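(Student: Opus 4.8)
The plan is to obtain Lemma~\ref{cyclelemma} as an immediate corollary of the two ingredients already assembled: the closed form of the limiting distribution $\pi(\cdot|0)$ in \eqref{Pcyclelimiting2}--\eqref{Pcyclelimiting1}, and the total-variation convergence estimate \eqref{convergeboundcycle}. Since the analytically demanding part---bounding the eigenvalue-gap sum $S_\circ$, and hence $\sum_m\big|\bar{p}_{\tau_\circ}(m|0)-\pi(m|0)\big|$---has already been carried out, what remains is essentially bookkeeping.

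First I would identify $p_\circ$ with the time-averaged weight on the far region $F$, the set of sites at cycle-distance greater than $L/6$ from the origin, so that $p_\circ=\sum_{m\in F}\bar{p}_{\tau_\circ}(m|0)$. The triangle inequality then gives
\begin{eqnarray}
\sum_{m\in F}\bar{p}_{\tau_\circ}(m|0)
&\geq&
\sum_{m\in F}\pi(m|0)
-\sum_{m\in F}\big|\bar{p}_{\tau_\circ}(m|0)-\pi(m|0)\big|,
\end{eqnarray}
and the subtracted sum is at most the full total-variation distance, which by \eqref{convergeboundcycle} is $O\!\left(L\log^2 L/\tau_\circ\right)$. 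This reduces the problem to lower-bounding $\sum_{m\in F}\pi(m|0)$.

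Second I would carry out the count. The near/far split is taken so that the near region (cycle-distance $\le L/6$) holds $L/3$ sites and the far region $F$ holds $2L/3$ sites. Of the two heavy sites carrying weight $\frac{2}{L}-\frac{2}{L^2}$ in \eqref{Pcyclelimiting1}---the return point $m=0$ and the antipodal point ``across the cycle''---only the antipodal point (at distance $L/2>L/6$) lies in $F$, while $m=0$ lies in the near region. Thus $F$ contains one heavy site and $2L/3-1$ ordinary sites of weight $\frac{1}{L}-\frac{2}{L^2}$ from \eqref{Pcyclelimiting2}, giving
\begin{eqnarray}
\sum_{m\in F}\pi(m|0)
&=&\left(\frac{2L}{3}-1\right)\!\left(\frac{1}{L}-\frac{2}{L^2}\right)+\left(\frac{2}{L}-\frac{2}{L^2}\right)\nonumber\\
&=&\frac{2}{3}-\frac{1}{3L},
\end{eqnarray}
and combining with the previous step yields $p_\circ\geq\frac{2}{3}-\frac{1}{3L}-O\!\left(L\log^2 L/\tau_\circ\right)$, as claimed.

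The only step requiring care is this last count: one must verify that the boundary between the near and far thirds can be placed so that $|F|=2L/3$ exactly and that precisely one of the two heavy sites falls inside $F$. This is where the divisibility hypothesis on $L$ enters, and it is what pins the constant at $\frac{1}{3L}$ rather than some nearby $O(1/L)$ quantity; everything else is a mechanical substitution of \eqref{Pcyclelimiting2}--\eqref{convergeboundcycle}.
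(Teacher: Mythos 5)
Your proposal is correct and follows essentially the same route as the paper's proof: identify $p_{\circ}$ with the time-averaged weight on the far region, control the deviation from $\pi(\cdot|0)$ by the total-variation bound \eqref{convergeboundcycle}, and evaluate $\sum_{m\in F}\pi(m|0)$ from \eqref{Pcyclelimiting2}--\eqref{Pcyclelimiting1}. Your counting of the heavy sites is in fact slightly more explicit than the paper's (which absorbs the bookkeeping into an $O(1/L)$ term); the only small caveat is that divisibility of $L$ by $4$ is really needed for the degeneracy/symmetry argument bounding $S_{\circ}$, not to make $2L/3$ an integer, so the exact constant $\frac{1}{3L}$ should still be read up to an $O(1/L)$ rounding of the region boundary.
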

\begin{proof}
We proceed as in the proof of Lemma \ref{convergelemma} in Section \ref{plainline}. 
Let us call the farther two thirds of the cycle (see Figure \ref{figurecircle} in Section \ref{switchsection}) the success region (SR). When we choose the time $\tau\leq \tau_{\circ}$ uniformly at random, the probability to measure a state $\ket{t}$ with $t\in SR$ is\begin{eqnarray}
	p_{\circ} = \sum_{m\in SR} 
			 \bar{p}_{\tau_{\circ}}(m|c),
\end{eqnarray}
Using the bound on the total variation distance \eqref{convergeboundcycle} we just proved and the formulae for the limiting distribution \eqref{Pcyclelimiting2},\eqref{Pcyclelimiting1}, we have
\begin{eqnarray}
	O\left(\frac{L \log^2 L}{\tau_{\circ}}\right) 
	&\geq& 
		\sum_{m=1}^{L} 
			\left| \bar{p}_{\tau_{\circ}}(m|c) - \pi(m|c) \right| \\
	&\geq& 
		\sum_{m\in SR} 
			\left| \bar{p}_{\tau_{\circ}}(m|c) - \pi(m|c) \right| \\
	&\geq& 
		\left| \sum_{m\in SR} 
			 \bar{p}_{\tau_{\circ}}(m|c) 
		-
		\sum_{m\in SR} 
				\pi(m|c) \right| \\
	&=& 
		\left| p_{\circ} - \frac{2}{3} + \frac{1}{3L} + O\left(\frac{1}{L}\right) 
			\right|.
\end{eqnarray} 
Therefore, the probability of finding the chain in state $\ket{\psi_{t\in SR}}$ at a random time $\tau \leq \tau_{\circ}$ is bounded from below by
\begin{eqnarray}
p_{\circ} \geq \frac{2}{3} - \frac{1}{3L} - O\left(\frac{L \log^2 L}{\tau_{\circ}}\right).
\end{eqnarray}
\end{proof}
It is thus enough to wait a random time not larger than $O(L\log^2 L)$ to find the state of the system in the success region with probability close to two thirds.

%%%%%%%%%%%%%%%%%%%%%%%%%%%%%%%%%%%%%%%%%%%%%%%%%%%%%%%%%

%%%%%%%%%%%%%%%%%%%%%%%%%%%%%%%%%%%%%%%%%%%%%%%%%%%%%%%%%%%%%
%%%%%%%%%%%%%%%%%%%%%%%%%%%%%%%%%%%%%%%%%%%%%%%%%%%%%%%%%%%%%

%%%%%%%%%%%%%%%%%%%%%%%%%%%%%%%%%%%%%%%%%%%%%%%%
%%%%%%%%%%%%%%%%%%%%%%%%%%%%%%%%%%%%%%%%%%%%%%%%


\begin{thebibliography}{99}

\bibitem{NCbook}
M. A. Nielsen, I. L. Chuang, Quantum Information and Computation, Cambridge University Press, Cambridge, UK, 2000.

%%%%%%%%%%%%%%%%%%%%%%
% models here
\bibitem{model:measurement}
D. Leung, Quantum computation by measurements, Int. J. Quant. Inf. 2, 33 (2004) 
\bibitem{model:graph}
A.~M.~Childs, D.~W.~Leung, and M.~A.~Nielsen, Unified derivations of measurement-based schemes for quantum computation, Phys. Rev. A 71, 032318 (2005)
\bibitem{model:anyons}
A.~Yu.~Kitaev, Fault-tolerant quantum computation by anyons, Annals Phys. 303, pp.2-30 (2003), 
\bibitem{model:adiabatic}
E.~Farhi et al., A Quantum Adiabatic Evolution Algorithm Applied to Random Instances of an NP-Complete Problem, Science 292, 472 (2001).
\bibitem{model:walk}
A.~Childs, Univesal Computation by Quantum Walk, arXiv:0806.1972 (2008).

\bibitem{Feynman:85}
R.~Feynman, Quantum mechanical computers, {\em Opt. News}, vol. 11, pp.~11--46, 1985.


%%% QMA results

\bibitem{Kitaev:book}
A.~Y. Kitaev, A.~H. Shen, and M.~N. Vyalyi.
\newblock {\em Classical and quantum computation}, volume~47 of {\em Graduate
  Studies in Mathematics}.
\newblock AMS, Providence, RI, 2002.

\bibitem{lh:KR03}
J.~Kempe and O.~Regev.
\newblock 3-local {H}amiltonian is {QMA}-complete.
\newblock {\em Quantum Information \& Computation}, 3(3):258--264, 2003.

\bibitem{lh:KKR06}
J.~Kempe, A.~Kitaev, and O.~Regev.
\newblock The complexity of the local {H}amiltonian problem.
\newblock {\em SIAM Journal of Computing}, 35(5):1070--1097, 2006.

\bibitem{lh:OT08}
R.~Oliveira and B.~M.~Terhal.
\newblock The complexity of quantum spin systems on a two-dimensional square lattice.
\newblock {\em Quantum Information \& Computation}, 8(10):0900-0924, 2008.

\bibitem{Nagaj:07a}
D.~{Nagaj} and S.~{Mozes}.
\newblock New construction for a {QMA} complete three-local {H}amiltonian.
\newblock {\em Journal of Mathematical Physics}, 48:2104, 2007.

\bibitem{GottesmanLine:07}
D. Aharonov, D. Gottesman, S. Irani, J. Kempe, The power of quantum systems on a line, {\em Proc. 48th IEEE Symposium on the Foundations of Computer Science} (FOCS), 373-383 (2007)


\bibitem{ERtriangle}
L.~Eldar, O.~Regev, Quantum SAT for a Qutrit-Cinquit Pair is QMA$_1$-Complete, ICALP 2008, L. Aceto et al. (Eds), Part I, LNCS 5125, pp. 881-892, Springer-Verlag Berlin, Heidelberg, 2008.
L.~Eldar, P.~Love, D.~Nagaj, O.~Regev, in preparation, 2009.

%%%%%%%%%%%%

\bibitem{JW:05}
D.~Janzing and P.~Wocjan, Ergodic quantum computing, Quantum Inf. Process. 4, 129 (2005).

\bibitem{VC-QCA:07}
K.~G.~H.~Vollbrecht and J.~I.~Cirac, Quantum Simulators, continuous-time automata, and translationally invariant system, Phys. Rev. Lett. 100, 010501 (2008).

\bibitem{NW:08}
D.~Nagaj and P.~Wocjan, Hamiltonian Quantum Cellular Automata in 1D, Phys. Rev. A 78, 032311 (2008).

\bibitem{AQC:Mizel}
A.~Mizel, D.~A.~Lidar and M.~Mitchell, Simple Proof of Equivalence Between Adiabatic Quantum Computation and the Circuit Model, Phys. Rev. Lett. 99, 070502 (2008).

\bibitem{Aharonov:07a}
D.~Aharonov, W.~{van Dam}, J.~Kempe, Z.~Landau, S.~Lloyd, and O.~Regev.
\newblock Adiabatic quantum computation is equivalent to standard quantum
  computation.
\newblock {\em SIAM Journal on Computing}, 37(1):166--194 (2007).

\bibitem{Falco1}
B. Apolloni, D. de Falco, The Clock of a Quantum Computer, J. Phys. A: Math. Gen. 35 (2002) 10033-10051.

\bibitem{Falco2}
D. de Falco, D. Tamascelli, Grover's algorithm on a Feynman computer, 
J. Phys. A: Math. Gen. 37 (2004) 909-930.

\bibitem{Falco3}
D. de Falco, D. Tamascelli, Entropy Generation in a Model of Reversible Computation,
Theoretical Informatics and Applications 40, 93 (2006).

\bibitem{Falco4}
D. de Falco, D. Tamascelli, Speed and Entropy of an Interacting Continuous Time Quantum Walk, J. Phys. A: Math. Gen. 39 (2006) 5873-5895.

\bibitem{FarhiAQC}
E. Farhi, J. Goldstone, S. Gutmann, M. Sipser, 
Quantum Computation by Adiabatic Evolution,
arXiv:quant-ph/0001106 (2000).

\bibitem{Qksat}
S.~{Bravyi}, Efficient algorithm for a quantum analogue of 2-SAT, arXiv: quant-ph/0602108 (2006).

\bibitem{footnoteLH}
A promise problem: Is the ground state energy of a 3-local Hamiltonian less than some $a$, or more than some $b$, with $(b-a)>1/poly(L)$.

\bibitem{AQC:codes}
S. Jordan, E. Farhi, P. Shor, Error correcting codes for adiabatic quantum computation,
Phys. Rev. A 74, 052322 (2006)


\bibitem{AQC:Lidar}
D. Lidar, Towards Fault Tolerant Adiabatic Quantum Computation, Phys. Rev. Lett. 100, 160506 (2008)

\bibitem{AQC:Childs}
A. Childs, E. Farhi, J. Preskill, Robustness of adiabatic quantum computation, Phys.Rev. A 65, 012322 (2002)


\bibitem{AQC:Lloyd}
S.~Lloyd, Robustness of quantum computing, arXiv:0805.2757


\bibitem{AQC:AharonovPrepare}
D. Aharonov, A. Ta-Shma, Adiabatic quantum state generation and statistical zero knowledge,
Proc. 35th Annual Symposium on Theory of Computing, pp. 20-29 (2003)

%%%

\bibitem{Biamonte}
J. D. Biamonte and P. J. Love,
Realizable Hamiltonians for Universal Adiabatic Quantum Computers,
Physical Review A, 78, 012352 (2008).


\bibitem{footnoteQMA}
However, the term \eqref{balance} can be used to find a new proof of $\QMA$ universality for the local Hamiltonian problem.



\bibitem{necklaces}
D.~Nagaj and D.~Reitzner, Quantum Walks on Necklaces, in preparation (2009).



\bibitem{CA:AAKVwalk:01}
D.~Aharonov, A.~Ambainis, J.~Kempe, U.~Vazirani, Quantum Walks on Graphs, In {\em Proc. of ACM Symposium on Theory of Computation} (STOC'01), p. 50-59, (2001).




\end{thebibliography}
\end{document}